\tikzstyle{k}=[draw,circle,inner sep=1pt,minimum width=2pt,fill,font=\small]
\tikzstyle{nu}=[draw,circle,inner sep=2pt,minimum width=10pt,font=\small]
\newtheorem{theorem}{Theorem}[section]
\newtheorem{lemma}{Lemma}[section]
\newtheorem{definition}{Definition}[section]
\newtheorem{corollary}{Corollary}[section]
\newtheorem{fact}{Fact}
\crefname{fact}{Fact}{Facts}
\DeclareMathOperator{\homc}{HOMC}
\DeclareMathOperator{\sharpp}{\#P}
\newcommand{\field}{\ensuremath{K}}
\newcommand{\ie}{i.e.\xspace}
\newcommand{\vnp}{\ensuremath{\text{VNP}}\xspace}
\newcommand{\vp}{\ensuremath{\text{VP}}\xspace}
\newcommand{\vac}{\ensuremath{\text{VAC}_0}\xspace}
\newcommand{\leqc}{\leq_c}
\DeclareMathOperator{\gf}{GF}
\DeclareMathOperator{\F}{\mathcal{F}}
\newcommand{\Fcycle}{\F_\text{cycle}}
\newcommand{\Fclique}{\F_\text{clique}}
\newcommand{\Fouterplanar}{\F_\text{outerplanar}}
\newcommand{\Fplanar}{\F_\text{planar}}
\newcommand{\Ftree}{\F_\text{tree}}
\newcommand{\Fgenus}{\F_{\text{genus}(k)}}
\newcommand{\cycleeven}{\ensuremath{\mathcal{UHC}_{\text{$n_{0}$,even}}}}
\title{Dichotomy Theorems for Homomorphism Polynomials of Graph Classes}
\author{Christian Engels\footnote{Saarland University, Department of Computer Science\\ \texttt{engels@cs.uni-saarland.de}}}
\begin{document}
\maketitle
\begin{abstract}
    In this paper, we will show dichotomy theorems for the computation of polynomials corresponding
    to evaluation of graph homomorphisms in Valiant's model.  We are given a fixed graph $H$ and
    want to find all graphs, from some graph class, homomorphic to this $H$. These graphs will be
    encoded by a family of polynomials.

    We give dichotomies for the polynomials for cycles, cliques, trees, outerplanar graphs, planar
    graphs and graphs of bounded genus.
\end{abstract}

\section{Introduction}
\label{sec:introduction}
Graph homomorphisms are studied because they give important generalizations of many natural
questions ($k$-coloring, acyclicity, binary CSP and many more cf.~\cite{hell2004graphs}). One of the
first results, given by Hell and Ne{\v{s}}et{\v{r}}il \cite{Hell199092}, was on the decision problem
where they gave a dichotomy. The exact result was, that deciding if there exists a homomorphism from
some graph $G$ to a fixed undirected graph $H$ is polynomial time computable if $H$ is
bipartite and NP-complete otherwise. A different side of graph homomorphisms was looked at by
Chekuri and Rajaraman \cite{Chekuri:1997:CQC:645502.656110} Dalmau, Kolaitis and Vardi
\cite{dalmau2002}, and Freuder \cite{Freuder:1990:CKS:1865499.1865500} and finally Grohe
\cite{Grohe:2007:CHC:1206035.1206036}. They studied the following: Given a restricted graph class
$\mathcal{G}$, decide if there is a graph $G\in \mathcal{G}$ homomorphic to a given graph
$H$. Instead of restricting the graph $H$ as in the first problem, we restrict the graph classes we
map from. Later, focus shifted onto the counting versions of these two sides where we have to count
the number of homomorphisms. Dyer and Greenhill \cite{dyer2000complexity} solved the first problem
in the counting case and Dalmau and Jonsson \cite{dalmau2004complexity} the second. The first
problem was extended by Bulatov and Grohe \cite{Bulatov2005148} to graphs with multiple edges. They
also notice some interesting connections to statistical physics and constraint satisfaction
problems. A good introduction to the history of graph homomorphism was written by Grohe and Thurley
\cite{grohe2011counting} and research on these topics continues even today with two noticeable being
the works by Goldberg, Grohe, Jerrum and Thurley \cite{GGJT10} and by Cai, Chen and Lu
\cite{CCL13}.

However, the arithmetic circuit complexity was still open. The previous results could only show that
the hard cases have no polynomial size circuits for counting the number of homomorphisms but it was
unclear if these problems are \vnp complete. The study of \vnp complete problems and the arithmetic
world was started in the seminal paper by Valiant \cite{Val79b}. In this world, we look at the
complexity of computing a family of polynomials using a family of arithmetic circuits. Recently, a
dichotomy for graph homomorphisms was shown by
Rugy-Altherre \cite{springerlink:10.1007/978-3-642-32589-2_29}. Here a graph is encoded by a product of edge
variables and sets of graphs as sums over these products. This is known as generating function and a
detailed definition will be provided in \Cref{sec:general-stuff}. However, this result was for the
first side of the graph homomorphism problem.

In this paper we look at the second side of the graph homomorphism problem to complete the picture
for the arithmetic circuit world. While we could not get a general theorem as in
\cite{dalmau2004complexity}, we show multiple hardness proofs for some classes. We will look at
cycles, cliques, trees, outerplanar graphs, planar graphs and graphs of bounded genus.

Recently, homomorphism polynomials in a different form are even used for giving natural characterizations of \vp
independent of the circuit definition \cite{MahajanFSTTCS}. In this way our results can be interpreted as showing that
some straightforward candidates originating from the counting world do not give a characterization of \vp.

\Cref{sec:general-stuff} gives a formal introduction to our model, related hard problems and states the problem
precisely. We prove our dichotomies in
\Cref{sec:cycles,sec:cliques,sec:trees,sec:outerplanar-graphs,sec:planar-graphs,sec:genus-k-graphs}
where the constructions in \Cref{sec:outerplanar-graphs,sec:planar-graphs,sec:genus-k-graphs} build
on each other. The construction in \Cref{sec:trees} will use a slightly different model as the other
sections. We will give a brief introduction into concepts from graph genus in
\Cref{sec:genus-k-graphs} but refer the reader to the textbook by Diestel \cite{diestel2000graph}.

\section{Model and Definitions}
\label{sec:general-stuff}
Let us first give a brief introduction to the field of Valiant's classes. For further information
the reader is referred to the textbook by B{\"u}rgisser \cite{burgisser2000completeness}. In this theory, we
are given an arithmetic circuit (a directed acyclic connected graph) with addition and
multiplication gates over some field $\field$. These gates are either connected to other gates or
input gates from the set $\field \cup X$ for some set of indeterminates $X$. At the end we have
exactly one output gate. An arithmetic circuit computes a polynomial in $\field[X]$ at the output
gate in the obvious way.

As Valiant's model is non-uniform, a problem consists of families of polynomials.  A $p$-family is a
sequence of polynomials $(f_{n})$ over $\field[X]$ where the number of variables is $n$ and the
degree is bounded by some polynomial in $n$. Additionally the family of polynomials $(f_{n})$ should
be computed by a family of arithmetic circuits $(C_{n})$ where $f_{n}$ is computed by $C_{n}$ for
all $n$. Valiant's Model focuses its study on $p$-families of polynomials.

We define $L(f)$ to be the number of gates for a minimal arithmetic circuit computing a given
polynomial $f\in \field[X]$. \vp is the class of all $p$-families of polynomials where $L(f_{n})$ is
bounded polynomially in $n$. Let $q(n),r(n),s(n)$ be polynomially bounded functions. A $p$-family
$(f_{n})\in \field[x_{1},\dots,x_{q(n)}]$ is in \vnp if there exists a family
$(g_{n})\in \field[x_{1},\dots,x_{r(n)},y_{1},\dots,y_{s(n)}]$ in \vp such that
\[
  f(x_{1},\dots,x_{q(n)})=\sum_{\epsilon \in \{0,1\}^{s(n)}}
  g(x_{1},\dots,x_{r(n)},\epsilon_{1},\dots,\epsilon_{s(n)}).
\]
The classes \vp and \vnp are considered algebraic analogues to P and NP or more accurately
$\sharpp$. We can also define an algebraic version of AC$_{0}$, mentioned by
Mahajan and Rao \cite{DBLP:journals/cc/MahajanR13}. A $p$-family is in \vac if there exists a family of arithmetic
circuit of constant depth and polynomial size with unbounded fan-in that computes the family of
polynomials.

The notion of a reduction in Valiant's model is given by $p$-projections. A $p$-family $(f_{n})$ is
a $p$-projection of $(g_{n})$, written as $(f_{n})\leq_{p} (g_{n})$, if there exists a polynomially
bounded function $q(n)$ such that for every $n$, $f(x_{1},\dots,x_{n}) = g(a_{1},\dots,a_{q(n)})$ for
some $a_{i}\in \field \cup \{x_{1},\dots,x_{n}\}$. Once we have a reduction, we get a notion of
completeness in the usual way.

However, we use a different kind of reduction called a $c$-reduction. This is similar to a Turing
reduction in the Boolean world. We define $L^{g}(f)$ as the number of gates for computing $f$ where
the arithmetic circuits is enhanced with an oracle gate for $g$. An oracle gate for the polynomial
$g\in \field[x_{1},\dots,x_{n'}]$ has as output $g(a_{1},\dots,a_{n'})$ where $a_{1},\dots,a_{n'}$
are the inputs to this gate. This allows us to evaluate $g$ on $a_{1},\dots,a_{n'}$ in one step if we
computed $a_{1},\dots,a_{n'}$ previously in our circuit.

We say $f$ $c$-reduces to $g$, written $(f_{n})\leq_{c} (g_{n})$, if there exists a polynomial $p$
such that $L^{g_{p(n)}}(f)$ is bounded by some polynomial. This reduction, however, is only useful
for \vnp and not for \vac and \vp. In this paper we will exclusively deal with $c$-reductions for
our \vnp completeness results.

\subsection{Complete Problems}
\label{sec:complete-problems}
We continue with the basic framework of graph properties. In the following $\field$ will be an
infinite field.

\begin{definition}
    Let $X$ be a set of indeterminates. Let $\mathcal{E}$ be a graph property, that is, a class of
    graphs which contains with every graph also all of its isomorphic copies. Let $G=(V,E)$ be an edge
    weighted, undirected graph with a weight function $w:E\rightarrow \field \cup X$. We extend the
    weight function by $w(E'):=\prod_{e\in E'} w(e)$ to subsets $E'\subseteq E$.

    The \emph{generating function} $\gf(G,\mathcal{E})$ of the property $\mathcal{E}$ is defined as
    \[
    \gf(G,\mathcal{E}) := \sum_{E'\subseteq E} w(E')
    \]
    where the sum is over all subsets $E'$ such that the subgraph $(V,E')$ of $G$ has property
    $\mathcal{E}$.
\end{definition}
The reader should notice that the subgraph still contains all vertices and just takes a subset of
the edges.

In the following, let $G$ be a graph and let $X= \{x_{e} \ |\ e\in E\}$. We label each edge $e$ by
the indeterminate $x_{e}$.  We conclude by stating some basic \vnp-complete problems. Proofs of these
facts can be found in the textbook by B{\"u}rgisser \cite{burgisser2000completeness}.
\begin{theorem}[\cite{burgisser2000completeness}]\label{thm:burguhc}
    $\gf(K_{n}, \mathcal{UHC}_{n})$ is \vnp-complete where $\mathcal{UHC}_{n}$ is the set of all hamiltonian
    cycles in $K_{n}$.
\end{theorem}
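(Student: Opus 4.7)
The plan is to establish membership of $\gf(K_n,\mathcal{UHC}_n)$ in \vnp via Valiant's coefficient criterion, and then to obtain \vnp-hardness by a $p$-projection from the directed Hamilton cycle polynomial on the complete digraph, whose \vnp-completeness is the standard starting point available in \cite{burgisser2000completeness}.

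For membership, each Hamilton cycle of $K_n$ is an edge subset whose defining property ($2$-regularity together with connectivity) can be checked by a polynomial-size arithmetic circuit in the indicator variables of the edges. Hence the coefficient function of $\gf(K_n,\mathcal{UHC}_n)$ lies in $\sharppoly$, so by Valiant's criterion the polynomial lies in \vnp; concretely, one encodes a candidate edge subset in a Boolean hypercube, multiplies its edge-product by the verifier's indicator, and sums over the Boolean hypercube.

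For hardness, I would apply the classical vertex-splitting gadget. Blow up each vertex $v$ of the source digraph into a three-vertex path $v_1 - v_2 - v_3$, and simulate every directed edge $(u,v)$ by the single undirected edge $\{u_3,v_1\}$ inside $K_{3n}$. At the polynomial level this becomes a $p$-projection into $\gf(K_{3n},\mathcal{UHC}_{3n})$: set $x_{\{v_1,v_2\}} = x_{\{v_2,v_3\}} = 1$, set $x_{\{u_3,v_1\}}$ to the variable of the directed edge $(u,v)$ for every ordered pair $u\neq v$, and set every remaining edge variable of $K_{3n}$ to $0$. The claim is that this substitution recovers the directed Hamilton cycle polynomial on the original $n$ vertices.

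The structural core of the verification is this: any Hamilton cycle of $K_{3n}$ whose weight survives the substitution must visit each middle vertex $v_2$ via its two gadget edges, so the cycle crosses every gadget as an uninterrupted internal path between $v_1$ and $v_3$. Since only edges of the form $\{u_3,v_1\}$ carry nonzero weight between distinct gadgets, all gadgets must be traversed in a globally consistent sense along the cycle, yielding either an \emph{all-forward} or an \emph{all-reverse} traversal. The two cases use disjoint between-gadget edge sets (the undirected edges $\{u_3,v_1\}$ and $\{u_1,v_3\}$ are different), so each directed Hamilton cycle of the original digraph corresponds to exactly one nonzero-weight undirected Hamilton cycle of $K_{3n}$, with matching monomial. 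The main obstacle is precisely this case analysis: one must confirm that no exotic Hamilton cycle of $K_{3n}$ escapes the gadget discipline while retaining nonzero weight, and that no spurious factor of $2$ arises from the forward/reverse symmetry of undirected cycles.
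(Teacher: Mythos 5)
This theorem is not proved in the paper at all: it is imported verbatim from B\"urgisser's textbook as background, so there is no in-paper argument to compare against. Your proof is correct and is essentially the classical one: membership follows from Valiant's criterion since testing whether an edge set is a Hamiltonian cycle is polynomial-time decidable, and hardness follows from the vertex-splitting $p$-projection out of the directed Hamiltonian cycle polynomial. The case analysis you flag as the main obstacle does close: every middle vertex $v_2$ has only two incident edges of nonzero weight, so both must appear in any surviving Hamiltonian cycle of $K_{3n}$, forcing each gadget to be traversed as an internal $v_1$--$v_2$--$v_3$ path; and since a directed cycle and its reversal use disjoint sets of between-gadget edges $\{u_3,v_1\}$ versus $\{v_3,u_1\}$, the correspondence is a bijection and no factor of $2$ appears.
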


\begin{theorem}[\cite{burgisser2000completeness}]\label{thm:burgclique}
    Let $\mathcal{CL}$ be the set of all cliques. Meaning, the set of all graphs, where one
    connected component is a complete graph and each of the remaining connected components consist
    of one vertex only. The family $\gf(K_{n},\mathcal{CL})$ is \vnp-complete.
\end{theorem}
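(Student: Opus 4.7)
The approach is the standard two-step recipe for \vnp-completeness: first prove membership in \vnp via Valiant's criterion, then establish \vnp-hardness via a $c$-reduction from a known \vnp-complete family (such as the permanent, or equivalently the family of \Cref{thm:burguhc}).

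For membership, every graph in $\mathcal{CL}$ on vertex set $[n]$ is determined by the vertex set $S\subseteq[n]$ of its unique nontrivial clique component, with the edgeless graph as the special case $|S|\leq 1$. Encoding $S$ by indicators $y\in\{0,1\}^n$, the product $\prod_{i<j}(1 - y_i y_j + y_i y_j\, x_{ij})$ evaluates to $\prod_{e\in\binom{S}{2}} x_e$, and a correction factor $h(y) = 1 - \sum_i y_i\prod_{j\neq i}(1 - y_j)$ suppresses the Hamming-weight-one terms so that the edgeless graph is counted exactly once. The summand is \vp-computable, so Valiant's criterion yields $\gf(K_n,\mathcal{CL})\in\vnp$.

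For hardness, a key observation is that for any graph $G$ on $N$ vertices, $\gf(G,\mathcal{CL})$ is already a $p$-projection of $\gf(K_N,\mathcal{CL})$: simply set the weights of non-edges of $G$ in $K_N$ to $0$. It therefore suffices to exhibit a polynomial-size graph whose clique polynomial is \vnp-hard. A natural candidate is the augmentation $H_n^{+}$ of the complement of the line graph of $K_{n,n}$ by a universal vertex $u$: the vertices of the core $H_n$ are pairs $(i,j)\in[n]\times[n]$, with edges joining pairs having distinct coordinates, so cliques in $H_n$ are precisely matchings in $K_{n,n}$. Setting $x_{u,(i,j)} = X_{ij}$ for fresh matrix indeterminates and all internal edges of $H_n$ to the constant $1$, the polynomial $\gf(H_n^{+},\mathcal{CL})$ decomposes as the number of matchings in $K_{n,n}$ (from cliques not containing $u$) plus the matching polynomial in the $X$'s (from cliques containing $u$). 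A substitution $X_{ij}\mapsto t\,X_{ij}$ together with Lagrange interpolation at $n+1$ values of $t$ then extracts the coefficient of $t^n$, which is the permanent.

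The main obstacle is the degree mismatch between cliques and permutations: a $k$-clique naturally contributes a monomial of degree $\binom{k}{2}$ in the edge variables, whereas a permanent term has degree $n$. The universal-vertex gadget resolves this by routing the $X$-variables through the $n$ spokes of $u$, while the internal compatibility edges carry only the Boolean weight $1$; the subsequent interpolation in $t$ is what cleanly isolates perfect matchings and recovers the permanent as the hard target.
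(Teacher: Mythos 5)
The paper offers no proof of this statement: it is quoted from B\"urgisser's textbook (the text explicitly defers to \cite{burgisser2000completeness} for the proofs of all three facts in that subsection), so there is no in-paper argument to compare against. Judged on its own, your proof is correct and is essentially the textbook route. The membership argument is fine: the exponential sum $\sum_{y\in\{0,1\}^n} h(y)\prod_{i<j}(1-y_iy_j+y_iy_jx_{ij})$ has a \vp-computable (indeed formula-computable) summand and enumerates each clique subgraph exactly once, with $h$ correctly collapsing the $n+1$ indicator vectors of Hamming weight at most one onto a single contribution for the edgeless graph; note this is the direct definition of \vnp rather than Valiant's criterion, which is a statement about coefficient functions, but that is only a naming quibble. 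The hardness argument is also sound: zeroing non-edges realizes $\gf(G,\mathcal{CL})$ as a projection of $\gf(K_N,\mathcal{CL})$, cliques of the complement of the line graph of $K_{n,n}$ are exactly the partial matchings, and the universal vertex $u$ carrying the matrix indeterminates on its spokes (with internal edges set to $1$) makes the homogeneous degree-$n$ part in the $X$-variables equal to the permanent, which interpolation at $n+1$ values of $t$ extracts --- this is precisely the homogeneous-component extraction of \Cref{lem:homdegree}, so the whole reduction is a legitimate $c$-reduction. Two caveats are worth recording. First, because you reduce from the permanent, the argument as written only gives hardness over fields of characteristic different from $2$; reducing instead from $\gf(K_n,\mathcal{UHC}_n)$ (\Cref{thm:burguhc}), or invoking the characteristic-$2$ completeness of the Hamiltonian cycle family, would be needed for full generality (the paper elsewhere also divides by $2$, so it is implicitly in the same regime). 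Second, the completeness you obtain is under $c$-reductions, not $p$-projections, because of the interpolation step; this matches the paper's conventions but should be stated when the theorem is quoted without qualification.
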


\begin{theorem}[\cite{burgisser2000completeness}]
    Let $\mathcal{M}$ be the set of all graphs where all connected components have exactly two
    vertices. The family $\gf(K_{n},\mathcal{M})$ is \vnp-complete.
\end{theorem}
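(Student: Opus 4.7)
The plan is to show \vnp-hardness via a $p$-projection from the permanent family $\mathrm{Perm}_n = \sum_{\sigma\in S_n}\prod_{i=1}^n y_{i,\sigma(i)}$, which is \vnp-complete by Valiant's original result (and hence also complete under $\leq_c$).

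First I would unpack the property $\mathcal{M}$: a graph in which every connected component has exactly two vertices must consist entirely of isolated edges, i.e.\ a perfect matching of its vertex set. Hence $\gf(K_n,\mathcal{M}) = 0$ when $n$ is odd, and for even $n$ it is the Hafnian-type polynomial $\sum_{M}\prod_{e\in M} x_e$, where the sum ranges over all perfect matchings $M$ of $K_n$. Membership in \vnp follows from the standard template for generating functions of graph properties: write the sum over $E'\subseteq E(K_n)$ as an exponential sum over $\{0,1\}^{|E(K_n)|}$, where the indicator that $(V,E')$ is a disjoint union of edges is a polynomially-bounded-degree polynomial in the $0/1$ selector variables.

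For hardness, I would split the $2n$ vertices of $K_{2n}$ into $U=\{u_1,\dots,u_n\}$ and $V=\{v_1,\dots,v_n\}$, and use the $p$-projection
\[
  x_{u_iu_j}\mapsto 0,\qquad x_{v_iv_j}\mapsto 0,\qquad x_{u_iv_j}\mapsto y_{ij}.
\]
Any perfect matching of $K_{2n}$ that uses at least one intra-$U$ or intra-$V$ edge contributes $0$. The surviving matchings are exactly those pairing every $u_i$ to some $v_{\sigma(i)}$, and these are in bijection with permutations $\sigma\in S_n$; each contributes the monomial $\prod_{i=1}^{n} y_{i,\sigma(i)}$. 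Summing yields precisely $\mathrm{Perm}_n(y)$, so $(\mathrm{Perm}_n) \leq_p (\gf(K_{2n},\mathcal{M}))$ and therefore $\gf(K_n,\mathcal{M})$ is \vnp-hard.

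There is essentially no obstacle here: the only place one has to be slightly careful is ruling out matchings that mix bipartite and non-bipartite edges, but killing the intra-part variables makes every such subset vanish identically. The reduction is a pure substitution, so it is literally a $p$-projection, and hence also a $c$-reduction, which is the notion required in the rest of the paper.
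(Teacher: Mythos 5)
Your proposal is correct and matches the approach the paper itself alludes to: the paper gives no proof (citing B\"urgisser) but remarks immediately afterwards that the permanent equals $\gf(K_{n,n},\mathcal{M})$, which is a projection of the matching polynomial on a complete graph obtained by zeroing out the intra-part edges --- exactly your substitution. The only cosmetic difference is that $K_{2n}$ suffices where the paper writes $K_{n^2}$, and both are fine since the projection only needs enough vertices to host the two sides.
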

This polynomial gives us all perfect matchings in a graph. It is well known that the original
\vnp-complete problem, the permanent, is equal to $\gf(K_{n,n},\mathcal{M})$ for bipartite graphs
which is a projection of $\gf(K_{n^{2}},\mathcal{M})$.

\subsection{The problem and related definitions}
\label{sec:problem}
We now formulate our problem. Let $G,H$ be undirected graphs. We will generally switch freely
between having the variable indexed by either edges ($x_e$) or vertices ($x_{i,j}$ for $i,j\in V$). We
let $x_{j}$ correspond to the self-loop at vertex $j$.

A homomorphism from $G=(V,E)$ to $H=(V',E')$ is a mapping $f:V\rightarrow V'$ such that for all edges
$\{u,v\}\in E$ there exist an edge $\{f(u),f(v)\}\in E'$. We can define the corresponding generating
function as follows.

\begin{definition}\label{prob:hom}
    Let $\mathcal{H}_{H}$ be the property of all connected graphs homomorphic to a fixed $H$. We
    denote by $\mathcal{F}^{H,n}$ the generating function
    $\mathcal{F}^{H,n} :=\gf(K_n,\mathcal{H}_{H}).$
\end{definition}

We can state now the first dichotomy theorem.
\begin{theorem}[\cite{springerlink:10.1007/978-3-642-32589-2_29}]\label{thm:althere}
    If $H$ has a loop or no edges, $\mathcal{F}^{H,n}$ is in \vac and otherwise it is \vnp-complete
    under $c$-reductions.
\end{theorem}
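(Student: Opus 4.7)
The plan is to dispose of the two trivial cases, then prove \vnp-membership, and finally construct a $c$-reduction from a known complete family to establish hardness.

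\emph{Trivial cases.} If $H$ has no edges, then any spanning subgraph $(V,E')$ of $K_n$ homomorphic to $H$ must itself be edgeless, since an edge in $E'$ would have to map to a non-existent edge of $H$; since $(V,\emptyset)$ is connected only when $|V|=1$, we get $\mathcal{F}^{H,n}=1$ for $n=1$ and $\mathcal{F}^{H,n}=0$ otherwise, trivially in \vac. If $H$ has a loop at a vertex $v$, the constant map to $v$ is always a homomorphism, so $\mathcal{H}_H$ is the class of all connected graphs and $\mathcal{F}^{H,n}$ coincides with the connected spanning-subgraph generating polynomial of $K_n$; this is placed in \vac via a dedicated inclusion--exclusion argument on vertex partitions.

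\emph{Hard case: $H$ has at least one edge and no loop.} Membership in \vnp follows from Valiant's criterion: the coefficient of each multilinear monomial $\prod_{e\in E'}x_e$ is $1$ precisely when $(V,E')$ is connected and admits a homomorphism to $H$, and $0$ otherwise; connectivity is decidable in polynomial time and homomorphism to the fixed graph $H$ lies in NP, so the coefficient function is in $\sharppoly$. For hardness, I would $c$-reduce from the Hamiltonian cycle polynomial $\gf(K_n,\mathcal{UHC}_n)$ of \Cref{thm:burguhc}. The idea is to build from $K_n$ an enlarged graph on $N=\poly(n)$ vertices by replacing each vertex (and possibly each edge) by a small gadget tailored to $H$, and then to evaluate $\mathcal{F}^{H,N}$ at a carefully chosen point---setting most gadget-internal variables to $0$ or to suitable field constants---so that the only surviving monomials are $\prod_{e\in C}x_e$ for Hamiltonian cycles $C$ of $K_n$. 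Two subcases arise: if $H$ is bipartite then $\mathcal{H}_H$ consists of connected bipartite graphs, so subdivisions in the gadget force every surviving cycle to have even length; if $H$ contains an odd cycle, the odd girth of $H$ dictates the right subdivision length to keep the encoded image homomorphic to $H$.

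The main obstacle will be the gadget design itself: it must simultaneously (i) force every surviving edge selection on the original vertex set of $K_n$ to form a Hamiltonian cycle, killing every subgraph that is disconnected, non-spanning, or contains a vertex of degree different from two, and (ii) ensure that the resulting enlarged subgraph is homomorphic to $H$, uniformly in the structure of $H$. Showing that after all substitutions exactly the Hamiltonian monomials appear with nonzero and correctly matched coefficients, with no spurious cancellations, demands a careful combinatorial analysis of how a connected spanning subgraph of the enlarged graph can map into $H$, and I expect this to be the technical heart of the proof.
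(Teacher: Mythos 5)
The paper never proves \Cref{thm:althere}; it is imported verbatim from Rugy-Altherre and used only as background, so there is no in-paper proof to compare against and I can only judge your attempt on its own terms. Doing so, the easy direction already contains a genuine error. In the loop case you identify $\mathcal{F}^{H,n}$ with the generating polynomial of \emph{connected} spanning subgraphs of $K_n$ and assert \vac membership ``via a dedicated inclusion--exclusion argument on vertex partitions.'' That polynomial is not in \vac: there are Bell-number many vertex partitions, and counting connected spanning subgraphs is $\sharpp$-complete (it specializes to all-terminal reliability), so this family cannot sit in constant depth and polynomial size unless $\sharpp$ collapses to non-uniform polynomial time. The theorem is only true because Rugy-Altherre's property carries no connectivity requirement --- the paper's own conclusion says so explicitly (``Rugy-Altherre looked at the property that any graph is homomorphic to a given graph $H$. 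This includes disconnected graphs\dots''). Under that reading the loop case is simply $\prod_{e}(1+x_{e})$, a depth-two circuit, and the edgeless case is the constant $1$. By taking the word ``connected'' in \Cref{prob:hom} at face value you have set yourself the task of proving a false statement in this branch.

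The hard case is a plan rather than a proof. You posit a vertex/edge gadget ``tailored to $H$'' and claim that evaluating $\mathcal{F}^{H,N}$ ``at a carefully chosen point'' leaves exactly the Hamiltonian-cycle monomials, while conceding that the gadget design and the no-spurious-cancellation analysis are the technical heart; that heart is precisely what must be supplied. Note also that a single evaluation point cannot isolate the degree-$n$ part of a polynomial: any such $c$-reduction has to invoke the homogeneous-component extraction of \Cref{lem:homdegree} (or interpolation across several oracle calls), which your sketch never does. The two structural facts your bipartite/non-bipartite case split gestures at --- every bipartite graph maps homomorphically onto a single edge of the loopless $H$, and the odd girth of a non-bipartite $H$ governs which odd cycles survive --- are indeed the right anchors, and combining the first with \Cref{lem:homdegree} to extract a known \vnp-complete family from the bipartite part is the standard route; but as written the argument does not get there.
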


Instead of looking at all graphs, we want to look at a restricted version. What happens if we do not
want to find every graph homomorphic to a given $H$ but every \emph{cycle} homomorphic to a given
$H$? We state our problem in the next definitions.
\begin{definition}\label{prob:ourhom}
    Let $\mathcal{E}_{n}$ be a graph property. Then $\mathcal{F}^{H,n}_{\mathcal{E}_{n}}$ is the generating function for all graphs in
    $\mathcal{E}_{n}$ on $n$ vertices homomorphic to a fixed graph $H$.
\end{definition}

\begin{definition}
    We define the following graph polynomials.
    \begin{itemize}
        \item $\mathcal{F}^{H,n}_{\text{cycle}_{n}}$ where $\text{cycle}_{n}$ is the property
        where one connected component is a cycle and the others are single vertices in a graph of
        size $n$.        
        \item $\mathcal{F}^{H,n}_{\text{clique}_{n}}$ where $\text{clique}_{n}$ is the
        property where one connected component is a clique and the others are single vertices in a
        graph of size $n$.
        \item $\mathcal{F}^{H,n}_{\text{tress}_{n}}$ where $\text{trees}_{n}$ is the
        property where one connected component is a tree and the others are single vertices in a
        graph of size $n$.
        \item $\mathcal{F}^{H,n}_{\text{outerplanar}_{n}}$ where $\text{outerplanar}_{n}$ is the
        property where one connected component is a outerplanar graph and the others are single vertices
        in a graph of size $n$.
        \item $\mathcal{F}^{H,n}_{\text{planar}_{n}}$ where $\text{planar}_{n}$ is the
        property where one connected component is a planar graph and the others are single vertices
        in a graph of size $n$.
        \item $\mathcal{F}^{H,n}_{\text{genus(k)},n}$ where $\text{genus(k),n}$ is the
        property where one connected component has genus $k$ and the others are single vertices in a
        graph of size $n$.
    \end{itemize}
\end{definition}

We will use the notation $\Fcycle$, $\Fclique$, $\Ftree$, $\Fouterplanar$, $\Fplanar$ and $\Fgenus$
as a shorthand.

Let us now introduce the homogeneous degree of a polynomial.
\begin{definition}\label{def:homcomp}
    Let $\bar{x}=x_{i_{1}},\dots,x_{i_{l}}$ be a subset of variables and $(f_{n})$ be a
    $p$-family. We can write $f_{n}$ as
    \[
    f_{n} =\sum_{\bar{i}} \alpha_{\bar{i}} \prod_{j=1}^{n} x_{j}^{i_{j}}.
    \]
    The homogeneous component of $f_{n}$ of degree $k$ with variables $\bar{x}$ is
    \[
    \homc_{k}^{\bar{x}}(f_{n}) = \sum_{\substack{i_{1},\dots,i_{l}\\
            k=\sum_{j=1}^{l} i_{j}}} \alpha_{i_{1},\dots,\i_{l}} x_{i_{1}}^{i_{i}}\dots x_{i_{l}}^{i_{j}}. 
    \]
\end{definition}

Finally, we need a last lemma in our proofs. This lemma was stated explicit by
Rugy-Altherre \cite{springerlink:10.1007/978-3-642-32589-2_29} and can also be found in
\cite{burgisser2000completeness}. It will give us a way to extract all polynomials of homogeneous
degree $k$ in some set of variables in $c$-reductions.
\begin{lemma}\label{lem:homdegree}
    Then for any sequence of integers $(k_{n})$ there exists a c-reduction from the homogeneous
    component to the polynomial itself:
    \[
    \homc_{k_{n}}^{\bar{x}}(f_{n}) \leqc (f_{n}).
    \]
    The circuit for the reduction has size in $\mathcal{O}(n\delta_{n})$ where $\delta_{n}$ is
    the degree of $f_{n}$.
\end{lemma}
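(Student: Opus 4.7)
The plan is to use a standard interpolation trick, exploiting that \field is infinite. Given $f_n \in \field[x_1,\dots,x_n]$ with degree bound $\delta_n$ and a distinguished subset $\bar{x}=x_{i_1},\dots,x_{i_l}$, I would introduce a formal parameter $t\in\field$ and, in the oracle call to $f_n$, scale every variable in $\bar{x}$ by $t$ while leaving the other variables alone. That is, I would evaluate
\[
  F(t) := f_n\bigl(\ldots,\, t\,x_{i_1},\,\ldots,\,t\,x_{i_l},\,\ldots\bigr).
\]
Expanding, a monomial of $f_n$ of total degree $d$ in the $\bar{x}$-variables contributes exactly $t^d$ times itself, so
\[
  F(t) \;=\; \sum_{d=0}^{\delta_n} t^{d}\,\homc_{d}^{\bar{x}}(f_n),
\]
where each coefficient is a polynomial in the remaining variables.

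Next, since $\field$ is infinite, I can pick $\delta_n+1$ pairwise distinct scalars $t_0,t_1,\dots,t_{\delta_n}\in\field$. For each $t_j$, the circuit first computes the scaled inputs $t_j x_{i_1},\dots,t_j x_{i_l}$ using $l$ multiplications and the fixed constants $t_j$, then feeds them (together with the untouched variables) into a single oracle gate for $f_n$ to obtain the value $F(t_j)$. Lagrange interpolation then expresses $\homc_{k_n}^{\bar{x}}(f_n)$ as a fixed $\field$-linear combination $\sum_j \lambda_j F(t_j)$, where the $\lambda_j\in\field$ are precomputable constants depending only on $t_0,\dots,t_{\delta_n}$ and $k_n$.

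For the size bound, there are $\delta_n+1$ oracle calls; each costs at most $l\le n$ scalar multiplications to prepare its inputs, and the final linear combination costs $O(\delta_n)$ further operations. This gives $O(n\delta_n)$ gates in total, matching the statement. The polynomial $p$ in the definition of $c$-reduction can be taken to be the identity, since all oracle calls are to $f_n$ itself.

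I do not expect any real obstacle: the only subtlety is ensuring that $\delta_n+1$ distinct field elements are available (handled by infiniteness of $\field$) and that the oracle calls are to the same $f_n$ rather than a padded version, which they are. The proof is essentially the classical homogenization-by-interpolation argument, as used in \cite{burgisser2000completeness,springerlink:10.1007/978-3-642-32589-2_29}.
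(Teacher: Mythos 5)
Your proof is correct and is exactly the standard homogenization-by-interpolation argument; the paper itself gives no proof of this lemma, deferring to Rugy-Altherre and B{\"u}rgisser, where this same scaling-plus-Lagrange-interpolation technique is used. The accounting of oracle calls and gates matches the claimed $\mathcal{O}(n\delta_{n})$ bound, and your use of the infiniteness of $\field$ to obtain $\delta_{n}+1$ distinct interpolation points is the right (and only) subtle point.
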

The reader should note that using this theorem will blow up our circuit polynomially in size and can
hence be used only a constant number of times in succession.  However, we can use this lemma on
subsets of vertices. We replace every variable $x_{i}$ in the subset by $x_{i}y$ for a new variable
$y$ and take the homogeneous components of $y$. We will use this technique to \emph{enforce} edges
to be taken. Notice that enforcing $n$ edges to be taken only increases the circuit size by a factor
of $n$. Additionally, we can set edge variables to zero to \emph{deny} our polynomial using these edges.

Let $G$ be a graph that is homomorphic to a given $H$. We will, in general, ignore self-loops in
$G$, \ie assume $G$ to never have any self-loops. If we have proven a theorem for all $G$ without
self loops, we can just take the homomorphism polynomial with self-loops, take the homogeneous
component of degree zero of all self-loops and get the homomorphism polynomial without
self-loops. As we will prove the dichotomy for these, the hardness will follow. 

\section{Dichotomies}
\subsection{Cycles}
\label{sec:cycles}
As a first graph class we look at cycles. The proof for the dichotomy will be relatively easy and
gives us a nice example to get familiar with homomorphism polynomials and hardness proofs. Our proofs
will, in general, reason first about the kind of monomials that exist for a given $H$ and then try to
extract or modify these via \Cref{lem:homdegree} to get a solution to a \vnp-complete problem. This
will yield the reduction.

Our main dichotomy for cycles is the following theorem.
\begin{theorem}\label{thm:cyclesvnp}
    If $H$ has at least one edge or has a self-loop, then $\Fcycle$ is \vnp-complete under $c$-reductions. Else it is in
    \vac.
\end{theorem}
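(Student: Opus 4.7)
The plan is to handle the easy direction first and then split the hard direction on whether $H$ has a self-loop. If $H$ has no edges and no self-loops, any graph homomorphic to $H$ must itself be edgeless; since every cycle has at least three edges, no graph with the $\text{cycle}_n$ property is homomorphic to $H$, so $\Fcycle \equiv 0 \in \vac$. Membership $\Fcycle \in \vnp$ in every other case is immediate from Valiant's criterion, since being a cycle-plus-isolated-vertices together with a chosen homomorphism to the fixed $H$ is polynomial-time verifiable.

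For hardness, first suppose $H$ has a self-loop at some vertex $v$. Then mapping every vertex of $G$ to $v$ witnesses a homomorphism for any $G$, so $\Fcycle$ coincides with the generating function of all subgraphs of $K_n$ consisting of a single cycle plus isolated vertices. Since a cycle on $k$ vertices uses exactly $k$ edges, applying \Cref{lem:homdegree} to extract the homogeneous component of total edge-degree $n$ isolates precisely the hamiltonian cycles; this yields a $c$-reduction to $\gf(K_n,\mathcal{UHC}_n)$, which is \vnp-complete by \Cref{thm:burguhc}.

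If $H$ has no self-loop but contains an edge $\{u,v\}$, then every even-length cycle is homomorphic to $H$ by alternating between $u$ and $v$. Restricting to the even-$n$ subfamily and again extracting the degree-$n$ edge-component via \Cref{lem:homdegree} yields the hamiltonian cycle polynomial of $K_n$ on even numbers of vertices, namely $\cycleeven$. The main obstacle is thus proving that $\cycleeven$ itself is \vnp-complete; I would establish this as a separate lemma by reducing from $\gf(K_n,\mathcal{UHC}_n)$: subdivide each edge of $K_n$ by one auxiliary vertex so that every length-$n$ hamiltonian cycle lifts to a length-$2n$ hamiltonian cycle in a graph of $O(n^2)$ vertices, and realize this subdivision in the polynomial by a routine variable substitution. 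The remainder of the argument is bookkeeping with \Cref{lem:homdegree} together with the known hardness of $\gf(K_n,\mathcal{UHC}_n)$.
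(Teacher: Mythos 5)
Your overall architecture matches the paper's: the edgeless case, the self-loop case, and the membership argument are handled identically, and in the edge-but-no-loop case you correctly reduce everything to showing that the Hamiltonian cycle polynomial restricted to complete graphs on an even number of vertices (the family $\gf(K_{n_0},\cycleeven)$ of \Cref{lem:cycleeventransfer}) is \vnp-hard. The gap is in your proof of that key lemma. Subdividing every edge of $K_n$ produces a graph on $N=n+\binom{n}{2}$ vertices in which the lift of a Hamiltonian cycle of $K_n$ is a cycle of length $2n$ --- but such a cycle is \emph{not} a Hamiltonian cycle of the subdivided graph, since it misses the $\binom{n}{2}-n$ unused subdivision vertices; your phrase ``a length-$2n$ hamiltonian cycle in a graph of $O(n^2)$ vertices'' is self-contradictory. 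Concretely, the oracle $\gf(K_{m_0},\cycleeven)$ only contains monomials for \emph{spanning} even cycles: choosing $m_0\approx N$ gives cycles of length $\Theta(n^2)$ that do not correspond to Hamiltonian cycles of $K_n$, while choosing $m_0=2n$ leaves no room for the $\binom{n}{2}$ subdivision vertices. Which subdivision vertices go unused depends on the particular cycle, so no single variable substitution repairs this.

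Two fixes are available. The paper's route goes in the opposite direction: to handle odd $n$ it calls the oracle on $K_{n+1}$ (even), enforces the single edge $x_{n+1,1}$ via \Cref{lem:homdegree}, substitutes $x_{i,n+1}\mapsto x_{i,1}$ and $x_{n+1,1}\mapsto 1$, thereby contracting that edge; every Hamiltonian cycle of $K_n$ arises exactly twice, and one divides by $2$ over the infinite field. Alternatively, your subdivision idea does work if you aim it at $\Fcycle$ itself rather than at $\cycleeven$: since $\Fcycle$ contains \emph{all} even cycles together with isolated vertices, you can extract the homogeneous component of degree $2n$, set the non-edges of the subdivided $K_n$ to zero, and give the two halves of each subdivided edge $\{i,j\}$ the weights $x_{i,j}$ and $1$; cycles of length $2n$ in the subdivision necessarily alternate between original and subdivision vertices and hence correspond exactly to Hamiltonian cycles of $K_n$. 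But that bypasses the intermediate lemma about $\cycleeven$ rather than proving it; as written, your argument for that lemma does not go through.
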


The next simple fact shows us which cycles are homomorphic to a given graph $H$. Let $n_{0}$ be
defined as $n$ if $n$ is even and $n-1$ if $n$ is odd.
\begin{fact}\label{fact:cyclehom}
    Given $H$ a graph with at least one edge, all cycles of length $n_{0}$ are homomorphic to $H$.
\end{fact}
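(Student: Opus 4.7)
The plan is to exhibit an explicit homomorphism by alternating. First I would observe that by construction $n_0$ is always even: $n_0=n$ when $n$ is even, and $n_0=n-1$ when $n$ is odd. So the cycles in question all have an even number of vertices.

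Next, since $H$ has at least one edge, fix any edge $\{u,v\}\in E(H)$. Label the vertices of the cycle of length $n_0$ in cyclic order as $v_1,v_2,\dots,v_{n_0}$, and define the map $f\colon\{v_1,\dots,v_{n_0}\}\to V(H)$ by
\[
 f(v_i) = \begin{cases} u & \text{if $i$ is odd,}\\ v & \text{if $i$ is even.}\end{cases}
\]
Every edge of the cycle is of the form $\{v_i,v_{i+1}\}$ (indices modulo $n_0$). Because $n_0$ is even, consecutive indices always have opposite parity, including the wrap-around edge $\{v_{n_0},v_1\}$; hence every such edge is mapped to $\{u,v\}\in E(H)$, which is exactly the definition of a graph homomorphism.

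There is essentially no obstacle here; the only subtle point is verifying the wrap-around edge, which is precisely why the definition truncates $n$ down to the nearest even integer $n_0$. The parity argument would fail for odd cycles (which is why $H$ must be ``bipartite enough'' to contain just a single edge rather than an odd cycle), so defining $n_0$ as the largest even number $\le n$ is exactly what makes the alternating construction go through.
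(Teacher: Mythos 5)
Your proof is correct and is essentially the paper's argument made explicit: the paper notes that the even cycle of length $n_0$ ``folds in half'' to a path and hence maps to a single edge, and your alternating map $v_i \mapsto u$ or $v$ according to the parity of $i$ is exactly that composite homomorphism written out. Nothing further is needed.
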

It is easy to see that by folding the graph in half we get one path which is trivially homomorphic
to an edge. Our hardness proof will only be able to handle cycles of even length. Luckily this is
enough to prove hardness.
\begin{lemma}\label{lem:cycleeventransfer}
    Let $\cycleeven$ be the graph property of all cycles of length $n_{0}$.

    Then $\gf(K_{n_{0}},\cycleeven)$ is \vnp-hard under c-reductions.
\end{lemma}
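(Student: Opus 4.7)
The plan is to $c$-reduce the Hamiltonian cycle family $(\gf(K_n,\mathcal{UHC}_n))_n$, which is \vnp-complete by \Cref{thm:burguhc}, to the target family $(\gf(K_{n_0},\cycleeven))_n$. The even case is immediate: whenever $n$ is even, $n_0 = n$ and $\cycleeven = \mathcal{UHC}_n$, so there is nothing to prove. All of the work is in the odd case; for odd $n$ the $c$-reduction is allowed to query the oracle at index $n+1$, where $(n+1)_0 = n+1$ and the target polynomial is $\gf(K_{n+1},\mathcal{UHC}_{n+1})$.

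The key observation is the decomposition of a Hamiltonian cycle $C'$ of $K_{n+1}$ by the pair $\{i,j\} \subseteq [n]$ of neighbours of vertex $n+1$ in $C'$: contracting the path $i-(n+1)-j$ is a bijection from such $C'$ to Hamiltonian cycles of $K_n$ that use the edge $\{i,j\}$. Grouping monomials accordingly gives
\[
    \gf(K_{n+1},\mathcal{UHC}_{n+1}) = \sum_{\{i,j\} \subseteq [n]} x_{i,n+1}\, x_{j,n+1}\, A_{\{i,j\}},
\]
where $A_{\{i,j\}} = \sum_{C \ni \{i,j\}} w(C)/x_{i,j}$ ranges over Hamiltonian cycles of $K_n$ through $\{i,j\}$. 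Setting $x_{i,n+1} = x_{j,n+1} = 1$ and $x_{k,n+1} = 0$ for $k \notin \{i,j\}$ isolates $A_{\{i,j\}}$ in one oracle call. Doing this for each of the $\binom{n}{2}$ pairs and forming $\sum_{\{i,j\}} x_{i,j}\, A_{\{i,j\}}$ produces $n \cdot \gf(K_n,\mathcal{UHC}_n)$, since every Hamiltonian cycle on $K_n$ has $n$ edges and is therefore counted once per edge it contains.

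The central obstacle is the leading factor $n$, which must be invertible in $K$. When the characteristic of $K$ is $2$ this is automatic, because $n$ is odd; when it is some odd prime $p$ dividing $n$, I instead query the oracle at index $2n$ and work inside $\gf(K_{2n},\mathcal{UHC}_{2n})$. I partition the vertex set of $K_{2n}$ into two copies $U = \{u_1,\dots,u_n\}$, $V = \{v_1,\dots,v_n\}$ of $[n]$, set every edge internal to $U$ or to $V$ to zero, set the matched cross-edges $\{u_i,v_i\}$ to a fresh variable $y$, and set the remaining cross-edges $\{u_i,v_j\}$ (with $i \neq j$) to $x_{i,j}$. Using \Cref{lem:homdegree} to extract the homogeneous component of $y$-degree $n$ enforces that all $n$ matched edges lie in the cycle, restricting the sum to the Hamiltonian cycles of $K_{n,n}$ whose matched edges form a complete matching. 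A direct analysis of such cycles (they are indexed by $n$-cycles in $S_n$ via the cross-edge permutation $\sigma$, and the involution $\sigma \leftrightarrow \sigma^{-1}$ collapses pairs to a single undirected cycle on $[n]$) shows that the component equals $y^n \cdot 2 \cdot \gf(K_n,\mathcal{UHC}_n)$; dividing by $2$ is legal in odd characteristic, and the two reductions together establish $(\gf(K_n,\mathcal{UHC}_n))_n \leqc (\gf(K_{n_0},\cycleeven))_n$ for every infinite field.
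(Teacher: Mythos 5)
Your proof is correct. The even case and the idea of querying the oracle at index $n+1$ when $n$ is odd coincide with the paper, but the mechanism for eliminating the extra vertex is different. The paper enforces the single edge $\{1,n+1\}$ via \Cref{lem:homdegree}, substitutes $x_{i,n+1}\mapsto x_{i,1}$ and $x_{1,n+1}\mapsto 1$, and observes that contracting this edge gives each Hamiltonian cycle of $K_{n}$ exactly two preimages; one oracle call thus yields $2\cdot\gf(K_{n},\mathcal{UHC}_{n})$ and the proof ends by dividing by $2$. You instead isolate, with $\binom{n}{2}$ oracle calls, the contribution of each possible neighbour pair of vertex $n+1$ and reassemble, which produces the factor $n$ rather than $2$; this is what forces your second, $K_{2n}$ bipartite gadget for characteristics dividing $n$, a construction with no counterpart in the paper (and whose analysis via $n$-cycles of $S_{n}$ and the $\sigma\leftrightarrow\sigma^{-1}$ involution is correct). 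The net effect is that your argument is longer but covers every infinite field: the paper's division by $2$ silently assumes characteristic different from $2$, and your characteristic-$2$ branch (where $n$ odd is automatically invertible) is precisely the case its proof does not address. Both are legitimate polynomial-size $c$-reductions; the paper's is the more economical route if one is content to exclude characteristic $2$, and you could streamline yours by using the $K_{2n}$ gadget everywhere except characteristic $2$.
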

\begin{proof}
    If $n$ is even, we can immediately use the hardness of $\gf(K_{n},\mathcal{UHC}_{n})$ (cf.\
    \Cref{thm:burguhc}). If $n$ is odd, we use the following reduction. We have given all cycles of
    length $n-1$ and want to get all cycles of length $n$. We evaluate the polynomial for $K_{n+1}$
    and get all cycles of length $n+1$. We can contract one edge with the following argument. We
    enforce, via taking the homogeneous component of degree one of $x_{n+1,1}$, all cycles to use
    $x_{n+1,1}$. We then replace $x_{i,n+1}$ by $x_{i,1}$ for all $i$ and set $x_{n+1,1}$ to
    one. This gives us all cycles of length $n$ with a factor $2$ for every monomial.

    To see this let us look at the following argument. Let the edge $(n+1,1)$ be the edge we
    contract and let $i,j$ be two arbitrary points picked in the graph. If we connect $i,j$ with a
    path through every point we can complete this into a cycle two different ways. Either with the
    edge $(1,i),(n+1,j)$ or $(1,j),(n+1,i)$. Notice, that every different choice of $i,j$ will
    construct a different cycle if we contract $1$ and $n+1$.

    This concludes our reduction to $\gf(K_{n},\mathcal{UHC}_{n})$. As our circuit can easily divide
    by two if the polynomial is over an infinite field (see. \cite{Strassen1973}).
\end{proof}
Later proofs will also use the contracting idea from the previous lemma.
A simple case distinction will give us the proof of the theorem.
\begin{proof}[Proof of \Cref{thm:cyclesvnp}]

    If $H$ has at least one edge, we know from \Cref{fact:cyclehom} that all even cycles are
    homomorphic to $H$ and by this represented in our polynomial. If we take the homogeneous
    components of degree $n_{0}$, we extract all even cycles of length $n_{0}$. This is \vnp-hard
    via the previous Lemma (\ref{lem:cycleeventransfer}).
    
    If $H$ has a self-loop, we can map all cycles to the one vertex in
    $H$. We can then extract the hamiltonian cycles of length $n$ by using the homogeneous degree of
    $n$ as all cycles are homogeneous to a self-loop.

    If $H$ has no edge, our polynomial is the zero polynomial as we cannot map any graph $G$
    containing an edge to $H$.

    Using Valiant's Criterion, we can prove membership of $\Fcycle$ in \vnp
    (cf.\cite{burgisser2000completeness}).
\end{proof}

\subsection{Cliques}
\label{sec:cliques}
Here, we will not use cycles in the hardness proof but work directly with the clique polynomial
defined by Bürgisser. The complete proof is an easy exercise. In contrast to the other results, we
show that computing $\Fclique$ is easy for most choices of $H$.

\begin{theorem}\label{thm:vnpclique}
    If $H$ has a self-loop then $\Fclique$ is \vnp-complete under c-reductions. Otherwise $\Fclique$ is in \vac.
\end{theorem}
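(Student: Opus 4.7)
The plan is to split on whether $H$ has a self-loop, since this is essentially the only circumstance under which cliques of unbounded size can map homomorphically to $H$.

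\textbf{Hard direction (self-loop in $H$).} I will reduce directly from the Bürgisser clique polynomial $\gf(K_n,\mathcal{CL})$ of \Cref{thm:burgclique}. The key observation is that if $H$ has a self-loop at some vertex $v_0$, then the constant map sending every vertex of the input graph to $v_0$ is a homomorphism from any (self-loop-free) graph to $H$: each edge is sent to the self-loop at $v_0$. In particular, every ``clique plus isolated vertices'' subgraph of $K_n$ is homomorphic to $H$, and contributes exactly the same edge-weight monomial to $\Fclique$ as it does to $\gf(K_n,\mathcal{CL})$. So $\Fclique$ coincides as a polynomial with $\gf(K_n,\mathcal{CL})$ and is \vnp-complete under $c$-reductions. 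Membership in \vnp follows from Valiant's criterion in the standard way.

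\textbf{Easy direction (no self-loop in $H$).} The key observation here is that every homomorphism $f:K_k\to H$ must be injective: if $f(u)=f(v)$ for two distinct vertices $u,v$ of $K_k$, the edge $\{u,v\}$ would force a self-loop in $H$, contradicting the hypothesis. Hence $K_k\to H$ exists iff $H$ contains a $k$-clique as a subgraph, which happens iff $k\le\omega(H)$, where $\omega(H)$ is the clique number of $H$. Since $H$ is a fixed graph, $\omega(H)$ is a constant, so
\[
  \Fclique \;=\; \sum_{\substack{S\subseteq V(K_n)\\ |S|\le\omega(H)}} \;\prod_{\{i,j\}\subseteq S} x_{i,j},
\]
a sum with at most $O(n^{\omega(H)})$ terms, each a product of at most $\binom{\omega(H)}{2}=O(1)$ edge variables. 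Such a polynomial is trivially computable by a polynomial-size, constant-depth arithmetic circuit with unbounded fan-in, placing $\Fclique$ in \vac.

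\textbf{Main obstacle.} There is no substantial technical obstacle; both directions reduce to simple structural observations about homomorphisms into $H$. The one subtlety worth double-checking is that in the self-loop case the \emph{isolated} vertices required by the $\mathrm{clique}_n$ property can also be routed to $v_0$, so that the whole subgraph (not merely the clique component) is homomorphic to $H$; this is what guarantees an equality of polynomials, rather than a mere inclusion of supports, with $\gf(K_n,\mathcal{CL})$.
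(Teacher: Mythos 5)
Your proof is correct and follows essentially the same route as the paper: in the self-loop case $\Fclique$ coincides with the clique polynomial $\gf(K_n,\mathcal{CL})$ of \Cref{thm:burgclique}, and in the loop-free case injectivity of homomorphisms out of $K_k$ bounds the clique size by the constant $\omega(H)$, so $\Fclique$ has $O(n^{\omega(H)})$ constant-degree monomials and lies in \vac. (One cosmetic point: your displayed sum over vertex sets $S$ with $\lvert S\rvert\le\omega(H)$ counts the empty-edge monomial $n+1$ times, whereas the generating function sums over edge subsets and counts it once; this does not affect the argument.)
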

\begin{proof}
    Let $H$ have at least one edge and no self-loop. We can use that $H$ has constant size which
    implies that $H$ has a maximal subgraph which forms a clique or $H$ has no clique. If $H$ has no
    clique, only a single edge or a single vertex is homomorphic to $H$.

    Let us now look at the case for cliques of size $c$. We can compute $\Fclique$ explicit by a
    brute-force algorithm. The number of monomials can be bound by the following argument. There are
    $\sum_{i=2}^{c}\binom n i$ many different cliques. As we can bound $\binom n i$ by $n^{i}$ we
    get an upper bound of $cn^{c}$ monomials. Further inspection yields, that constant depth,
    unbounded fan-in circuits of polynomial size are enough to compute all cliques up to size
    $c$.
    
    Let $H$ now have one self-loop. The fact that all cliques are homomorphic to a given graph with
    a self-loop tells us that $\Fclique$ contains different monomials for all cliques of size $i$
    for $i=1\dots,n$. The \vnp-hardness follows via \Cref{thm:burgclique}.

    The empty graph has the zero polynomial.

    As a polynomial time deterministic machine can easily check if a given instance is a clique, we can use
    Valiant's Criterion to show membership in \vnp.
\end{proof}

\subsection{Trees}
\label{sec:trees}
As the new characterization of VP had a specific tree structure we want to look at the general
problem. In previous sections our polynomial just contained the edges of the graph but for this
section we need a slightly different model. If a monomial in our polynomial would select the edges
$E'$ we also select the vertices $\{u,v | \{u,v\}\in E'\}$ in our monomial. In essence, we will also
select the vertices forming the edges, giving us polynomials with variables
$X=\{x_{e} | e\in E\} \cup \{x_{v} | v\in V \}$. It will be clear later why we need this special
form.

\begin{theorem}\label{thm:treehard}
    If $H$ contains an edge, then $\Ftree$ is \vnp-complete under c-reductions. Otherwise $\Ftree$ is in \vac.
\end{theorem}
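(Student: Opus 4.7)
The easy direction is immediate: if $H$ has no edge, then any graph homomorphic to $H$ must itself be edgeless, so only single-vertex ``trees'' contribute. In the vertex-and-edge model, such a monomial carries no variables at all, so $\Ftree$ is a constant, manifestly in \vac.

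For the hard direction, fix $H$ containing an edge $\{a,b\}$. Every tree is bipartite, so its $2$-coloring is a homomorphism into this edge and hence into $H$. Therefore
\[
\Ftree \;=\; \sum_{S \subseteq V,\, |S| \geq 2} \Big(\prod_{v \in S} x_v\Big) \sum_{T \text{ spanning tree of } K_n[S]} \prod_{e \in T} x_e
\]
up to a constant from isolated-vertex terms. The plan is to give a c-reduction from $\gf(K_m,\mathcal{UHC}_m)$ (\Cref{thm:burguhc}) to $\Ftree$. The vertex variables are what makes this possible: a single application of \Cref{lem:homdegree} via the substitution $x_v \mapsto x_v y$ followed by extracting the appropriate power of $y$ lets us restrict $\Ftree$ to trees whose vertex set has a prescribed size, and in particular to the spanning trees of $K_n$. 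Since the pure spanning tree polynomial itself lies in \vp by the Matrix--Tree theorem, the additional hardness must come from combining this vertex extraction with edge substitutions that further constrain the tree's combinatorial type.

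Concretely, I would design a gadget graph $G$ on $\Theta(m)$ vertices, obtained by augmenting $K_m$ with a small number of auxiliary vertices, such that spanning trees of a very restricted shape encode Hamiltonian cycles of $K_m$. Auxiliary vertices are forced into the tree via the vertex homogeneous component, edges not intended for the construction are set to zero, and the remaining monomials are put in bijection with Hamiltonian cycles (possibly after a contraction argument along the lines of \Cref{lem:cycleeventransfer} to bridge parity issues). Membership of $\Ftree$ in \vnp follows from Valiant's criterion, as checking whether a candidate subgraph is a tree on a specified vertex set is polynomial-time decidable.

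The main obstacle is that \Cref{lem:homdegree} may be invoked only a constant number of times, while separating a Hamiltonian path from a generic spanning tree morally requires imposing $n$ simultaneous degree-$\leq 2$ constraints. The crux of the proof is therefore a single gadget whose structure, together with $O(1)$ homogeneous component extractions and one global edge substitution, enforces all of these local degree conditions at once; I expect this gadget to exploit forced auxiliary vertices together with the product structure of the vertex variables to make the count over all other spanning trees collapse to zero.
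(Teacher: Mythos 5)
There is a genuine gap: your hardness argument is a plan rather than a proof. The entire weight of the reduction rests on a gadget that you do not construct --- you say you ``would design'' a graph whose suitably restricted spanning trees encode Hamiltonian cycles, and you correctly identify the obstacle (a spanning tree is only a Hamiltonian path if every vertex has degree at most $2$, which looks like $n$ simultaneous local constraints, while \Cref{lem:homdegree} can only be applied $O(1)$ times), but you then merely ``expect'' that some gadget overcomes it. That expectation is exactly the missing content. Your preliminary observations are sound --- the edgeless case, \vnp{} membership via Valiant's criterion, the fact that every tree maps into a single edge of $H$, and the remark that plain spanning trees are in \vp{} by Matrix--Tree so the vertex variables must do real work --- but none of this establishes hardness.

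The paper escapes your obstacle by choosing a different \vnp-complete target: the perfect-matching polynomial $\gf(K_n,\mathcal{M})$ rather than $\gf(K_m,\mathcal{UHC}_m)$. Given a graph $(V,E)$ with $|V|=n$, one builds inside a large complete graph the gadget of \Cref{fig:treematch}: a root $s$, an \emph{edge-vertex} $v_e$ for each $e\in E$ adjacent to $s$ and to the two endpoints of $e$, and the original vertices, with all other edges zeroed out. Two applications of \Cref{lem:homdegree} on \emph{vertex} variables --- degree $n/2$ on the edge-vertices and degree $n$ on $V$ --- suffice, because the local degree bounds you were worried about then follow from a global pigeonhole argument: each original vertex is adjacent only to edge-vertices, so it needs at least one tree edge into the selected edge-vertices, while each of the $n/2$ selected edge-vertices can serve at most two original vertices; hence equality holds everywhere and the selected edge-vertices form a perfect matching of $V$. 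This is the structural trick you were looking for, and it is precisely why the target problem matters: degree-at-most-$2$-everywhere (Hamiltonicity) has no such global certificate in a tree, whereas degree-exactly-$2$ on a designated layer of a star-shaped gadget does. If you want to salvage a reduction from $\mathcal{UHC}$ you would need an additional structural constraint (the paper only obtains one in the outerplanar and planar sections, where forbidden minors supply it); for trees no such constraint is available, so you should switch targets.
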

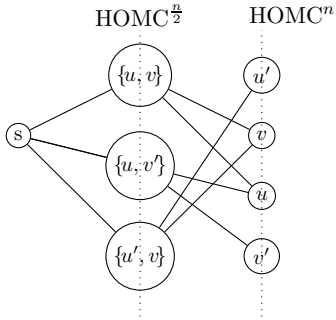
\begin{figure}
    \centering
    \begin{tikzpicture}[scale=0.8, every node/.style={transform shape}]
        \node[nu] (s) at (0,0) {s};
        \node[nu] (uv) at (2,1) {$\{\!u,v\!\}$};
        \node[nu] (uvp) at (2,-0.5) {$\{\!u,v'\!\}$};
        \node[nu] (upv) at (2,-2) {$\{\!u',v\!\}$};

        \node[nu] (u) at (4,-1) {$u$};
        \node[nu] (v) at (4,0) {$v$};
        \node[nu] (up) at (4,1) {$u'$};
        \node[nu] (vp) at (4,-2) {$v'$};

        \draw (s) -- (uv);
        \draw (s) -- (uvp);
        \draw (s) -- (uvp);
        \draw (s) -- (upv);
        \draw (uv) -- (u);
        \draw (uv) -- (v);
        \draw (uvp) -- (u);
        \draw (uvp) -- (vp);
        \draw (upv) -- (up);
        \draw (upv) -- (v);

        \node at (4.5,2) {$\homc^{n}$};
        \draw[dotted] (4,-3) -- (4,1.75);
        \node at (2,2) {$\homc^{\frac n 2}$};
        \draw[dotted] (2,-3) -- (2,1.75);
    \end{tikzpicture}
    \caption{Reduction from Trees to matching\label{fig:treematch}}
\end{figure}
\begin{proof}
    We use a reduction from connected partial trees to perfect matchings. It is obvious that a tree
    is always homomorphic to one edge.

    We want to compute a matching on a graph given by $(V,E)$. We can build a graph as in
    \Cref{fig:treematch} from a $K_{n}$ by setting the weight of every edge not given to zero. In
    detail, our graph has vertices $\{v\in V\} \cup \{v_e | e\in E\} \cup \{s\}$. We add the edges
    $\{(u,v),u\}, \{(u,v),v\}$ and $\{s,v_e\}$ for every $e\in E$. Vertices of the form
    $\{v_e | e\in E\}$ will be called \emph{edge-vertices} in this proof. Now as the vertices are
    given by our polynomials we can take the homogeneous components over vertices. We take the
    homogeneous components of degree $n/2$ over vertices $\{v_e| e\in E\}$ and of degree $n$ of
    vertices $v\in V$. Our matching in the original graph is given by the edges $(s,v_e)$.
    
    Every matching in the original graph has obviously a tree in our graph. Left to prove is the
    other direction. Given a tree in our graph, we know that only $n/2$ edge-vertices are
    selected. As every vertex $v\in V$ has to be connected by an edge, edge-vertices have to go to
    pairwise different sets of $v\in V$. Hence we can compute a perfect matching which is as hard as
    computing the permanent.
    
    Valiant's Criterion will again show the membership.
\end{proof}
We crucially need the fact that we get the adjacent vertices for free in our homomorphism
polynomials. The reader might think restricting the edges out of $s$ might suffice but this is not
the case. We could have a path that starts from $s$ goes over an edge-vertex to a vertex $u$ takes
the edge from $u$ to some other edge-vertex and continues until we have connected all edge-vertices
and all vertices into a path. This is obviously not a matching. If we want to forbid this behaviour,
we might want to select all edges outgoing from $s$. This would prevent the above case but the
reconstruction of a matching is non trivial.

An interesting fact of the proof is, that it does not use the fact that the graph class only
contains trees. Instead we only use that it contains trees. Hence the theorem can be easily extended
to other graph classes, provided we look at the homomorphism polynomials which contain edges and the
vertices connected to these edges.
\begin{corollary}
    Let $\mathcal{C}$ be a graph class containing all trees of size $n$. Then the following theorem holds on
    homomorphism polynomials containing edge and vertex variables from $\mathcal{C}$ to a given $H$. If $H$
    contains an edge, then the homomorphism polynomial is \vnp-complete under
    c-reductions. Otherwise it is in \vac.
\end{corollary}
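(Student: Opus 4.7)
The plan is to lift the argument of Theorem~\ref{thm:treehard} with only minor adjustments to accommodate the richer class $\mathcal{C}$. First I would dispose of the easy direction: if $H$ has no edge, then only edgeless subgraphs can be homomorphic to $H$, so the polynomial becomes a sum of pure vertex-monomials ranging over those vertex subsets whose underlying (edgeless) graphs lie in $\mathcal{C}$, which is placed in \vac by a constant-depth brute-force enumeration.

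For the hardness direction I would reuse the gadget of Theorem~\ref{thm:treehard} verbatim. Given an instance $(V,E)$ of perfect matching, I build the rooted tree of Figure~\ref{fig:treematch} (center $s$, edge-vertices $v_e$, leaves $V$), padding with isolated vertices to reach the required size. Since $\mathcal{C}$ contains all trees on $n$ vertices, this gadget itself lies in $\mathcal{C}$. Zeroing out every edge variable of $K_n$ not appearing in the gadget, only subgraphs of this tree can survive in the generating function, and any such subgraph is automatically a forest; so cycles---or any other non-tree members that $\mathcal{C}$ may happen to contain---cannot contribute spurious terms. Applying Lemma~\ref{lem:homdegree} to extract the homogeneous component of degree $n/2$ in the edge-vertex variables and of degree $n$ in the original-vertex variables, the same counting argument as in Theorem~\ref{thm:treehard} shows that the surviving forest subgraphs encode the perfect matchings of $(V,E)$.

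The main point I expect to require extra care---beyond the original proof---is that $\mathcal{C}$ may contain disconnected forests in addition to trees, so for each matching $M$ several sub-forests of the gadget may survive (depending on which subset of the $n/2$ edges out of $s$ is taken). The resulting graph, however, depends combinatorially only on how many of the $s$-edges are taken and not on $M$ itself, so by this symmetry the number of surviving sub-forests is the same for every matching $M$. We therefore obtain a uniform constant multiple of the matching generating polynomial; since $\field$ is infinite this constant can be divided out, and we recover the \vnp-hard polynomial $\gf(K_n, \mathcal{M})$ exactly. Membership in \vnp follows by Valiant's Criterion whenever $\mathcal{C}$ is polynomial-time recognizable.
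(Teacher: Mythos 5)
Your overall strategy is the paper's: reuse the gadget and reduction of \Cref{thm:treehard}, zero out every weight outside the gadget, and argue that the extra members of $\mathcal{C}$ cannot contribute. However, the step you use to dispose of those extra members fails. The gadget of \Cref{fig:treematch} is \emph{not} a tree and its subgraphs are \emph{not} automatically forests: whenever two edges $e=\{u,v\}$ and $e'=\{u,v'\}$ of the matching instance share an endpoint, the gadget contains the $4$-cycle $s,v_{e},u,v_{e'},s$. So the claim that ``any such subgraph is automatically a forest'', and hence that cyclic members of $\mathcal{C}$ cannot produce spurious terms, is false --- and this is exactly the point the corollary has to address. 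The correct reason these terms disappear is the homogeneous-component extraction itself: a surviving subgraph touches all $n$ original vertices but only $n/2$ edge-vertices, so the selected edge-vertices must cover pairwise disjoint pairs; then no two selected edge-vertices share an original vertex, their only possible common neighbour is $s$, and only for that reason is the surviving subgraph acyclic.

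Your handling of disconnected sub-forests is also flawed. The sub-forests of the matching tree for $M$ obtained by dropping different subsets of the edges $\{s,v_{e}\}$ contribute \emph{different monomials} (each omits different edge variables), not the same monomial with multiplicity; the fact that their number is independent of $M$ therefore does not make the result a scalar multiple of $\gf(K_{n},\mathcal{M})$, and dividing by a constant does not recover it. Under the paper's convention a graph property consists of one nontrivial connected component plus isolated vertices, so a subgraph missing some edge $\{s,v_{e}\}$ has two nontrivial components and simply does not occur; if you want to allow a genuinely disconnected $\mathcal{C}$, you must instead apply \Cref{lem:homdegree} once more to enforce homogeneous degree $n/2$ in the variables $x_{s,v_{e}}$, which pins down the full matching tree. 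With these two repairs your argument coincides with the paper's one-line proof, which sets all weights outside the gadget to zero and reruns the reduction of \Cref{thm:treehard} verbatim.
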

\begin{proof}
    We can easily set the weight of every vertex not in our tree to zero and construct the same
    reduction as in the theorem.
\end{proof}

\subsection{Outerplanar Graphs}
\label{sec:outerplanar-graphs}
Next we will show a dichotomy for outerplanar graphs. We start with the case of a triangle
homomorphic to $H$.
\begin{figure}
    \centering
    \begin{subfigure}[b]{0.32\textwidth}
        \centering
        \begin{tikzpicture}[scale=0.75, every node/.style={transform shape}]
            \node (A) [draw,regular polygon, regular polygon sides=12, minimum size=2cm,outer sep=0pt] {};
            \foreach \n in {1,...,12} {
                \node at (A.corner \n) [anchor=360/5*(\n-1)+270] {};
            }
            \draw[very thick] (A.corner 1) -- (A.corner 3);
            \draw[ultra thick] (A.corner 1) -- (A.corner 4);
            \draw[ultra thick] (A.corner 1) -- (A.corner 5);
            \draw[ultra thick] (A.corner 1) -- (A.corner 6);
            \draw[ultra thick] (A.corner 1) -- (A.corner 7);
            \draw[ultra thick] (A.corner 1) -- (A.corner 8);
            \draw[ultra thick] (A.corner 1) -- (A.corner 9);
            \draw[ultra thick] (A.corner 1) -- (A.corner 10);
            \draw[ultra thick] (A.corner 1) -- (A.corner 11);

            \draw[ultra thick] (A.corner 1) -- (A.corner 2);
            \draw[ultra thick] (A.corner 1) -- (A.corner 12);
        \end{tikzpicture}
        \caption{Triangle graph\label{fig:optriangle}}
    \end{subfigure}
    \begin{subfigure}[b]{0.32\textwidth}
        \centering
        \begin{tikzpicture}[scale=0.75, every node/.style={transform shape}]
            \node (A) [draw,regular polygon, regular polygon sides=13, minimum size=2cm,outer sep=0pt] {};
            \foreach \n in {1,...,13} {
                \node at (A.corner \n) [anchor=360/7*(\n-1)+270] {};
            }
            \draw[ultra thick] (A.corner 1) -- (A.corner 2);
            \draw[ultra thick] (A.corner 1) -- (A.corner 4);
            \draw[ultra thick] (A.corner 1) -- (A.corner 6);
            \draw[ultra thick] (A.corner 1) -- (A.corner 8);
            \draw[ultra thick] (A.corner 1) -- (A.corner 10);
            \draw[ultra thick] (A.corner 1) -- (A.corner 12);

            \draw[ultra thick] (A.corner 2) -- (A.corner 3);
            \draw[ultra thick] (A.corner 4) -- (A.corner 5);
            \draw[ultra thick] (A.corner 6) -- (A.corner 7);
            \draw[ultra thick] (A.corner 8) -- (A.corner 9);
            \draw[ultra thick] (A.corner 10) -- (A.corner 11);
            \draw[ultra thick] (A.corner 12) -- (A.corner 13);
        \end{tikzpicture}
        \caption{Illustration of graph with buddy vertices\label{fig:buddy}}
    \end{subfigure}
    \begin{subfigure}[b]{0.32\textwidth}
        \centering
        \begin{tikzpicture}[scale=0.75, every node/.style={transform shape}]
            \node[nu] (m) at (0,0) {$c$};
            \node[nu] (j) at (-1,-1) {$p(v)$};
            \node[nu] (jp) at (0,-1) {$u$};
            \node[nu] (jpp) at (1,-1) {$u'$};
            \node[nu] (i) at (0.75,-2) {$v$};
            \draw[ultra thick] (m) -- (j);
            \draw[ultra thick] (m) -- (jp);
            \draw[ultra thick] (m) -- (jpp);
            \draw[ultra thick] (m) -- (i);

            \draw (i) -- (j);
            \draw (i) -- (jp);
            \draw (i) -- (jpp);
        \end{tikzpicture}
        \caption{\label{fig:threeedgeouterplanar}}
    \end{subfigure}    
    \caption{}
\end{figure}
\begin{lemma}\label{thm:optrianglehard}
    If a triangle is homomorphic to $H$ then $\Fouterplanar$ is \vnp hard under c-reductions.
\end{lemma}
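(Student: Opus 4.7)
The plan is to reduce the Hamiltonian cycle polynomial $\gf(K_n,\mathcal{UHC}_n)$, \vnp-complete by \Cref{thm:burguhc}, to $\Fouterplanar$ in two stages, passing through the Hamiltonian path polynomial $HP_n = \sum_{P\text{ Ham path in }K_n}\prod_{e\in P}x_e$. The starting observation is that every outerplanar graph is $3$-colorable and therefore homomorphic to $K_3$; composing with the hypothesized homomorphism $K_3\to H$ shows that every outerplanar graph is homomorphic to $H$, so $\Fouterplanar$ on a clique host really enumerates all outerplanar subgraphs (modulo the isolated-vertex convention).

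For the geometric part of the argument I would work on $K_{n+1}$ with a distinguished ``hub'' $c$ and rim vertices $v_1,\dots,v_n$, writing the radial edges as $y_i$ and the rim edges as $x_{ij}$. Two successive applications of \Cref{lem:homdegree} carry the argument: first, replace each $y_i$ by $t\,y_i$ and take $\homc_n^t$ to force every $y_i$ into every monomial, so that $c$ is a universal vertex of every selected subgraph; then take $\homc_{n-1}$ in the rim variables $\{x_{ij}\}$ to force exactly $n-1$ rim edges. After setting the $y_i$ to $1$, what remains is (I claim) the Hamiltonian path polynomial on $\{v_1,\dots,v_n\}$.

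The crux is a small structural lemma: if an outerplanar graph $G$ on $\{c,v_1,\dots,v_n\}$ has $c$ as a universal vertex, then $G-c$ is a linear forest. A vertex of degree $\ge 3$ in $G-c$ together with $c$ already contains a $K_{2,3}$ subgraph, and any cycle $u_1u_2\cdots u_ku_1$ in $G-c$ yields a $K_4$ minor by taking the super-nodes $\{c\},\{u_1\},\{u_2\},\{u_3,\dots,u_k\}$ (with $u_2u_3$, $u_1u_k$, and $cu_3$ providing the three non-trivial connections); both are forbidden in outerplanar graphs. A linear forest on $n$ vertices with exactly $n-1$ edges is a single spanning path, so the combination of the two $\homc$'s picks out exactly the Hamiltonian paths --- the subgraph together with the radial star is then the fan-triangulated $(n+1)$-gon of \Cref{fig:optriangle}, which saturates the outerplanar edge bound $2(n+1)-3$.

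It remains to reduce $\gf(K_n,\mathcal{UHC}_n)$ to $HP_n$ by $c$-reduction. For each pair $\{a,b\}$, mark the edges via $x_{ij}\mapsto x_{ij}z_iz_j$ and apply \Cref{lem:homdegree} twice, extracting $\homc_1^{z_a}$ and then $\homc_1^{z_b}$, to isolate the polynomial $HP_n^{(a,b)}$ of Hamiltonian paths with endpoints $a,b$ (the only vertices of path-degree $1$). Then
\[
  \gf(K_n,\mathcal{UHC}_n) \;=\; \frac{1}{n}\sum_{\{a,b\}}x_{ab}\cdot HP_n^{(a,b)},
\]
because every Hamiltonian cycle breaks into a Hamiltonian path by removing any one of its $n$ edges; the division by $n$ uses that $\field$ is infinite. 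The main obstacle is the structural characterization of outerplanar graphs with a universal vertex; once that is in hand, only a constant-depth cascade of applications of \Cref{lem:homdegree} is needed to assemble the whole reduction.
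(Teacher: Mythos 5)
Your proof is correct, and it uses the same gadget as the paper --- the fan of \Cref{fig:optriangle}, obtained by enforcing a universal hub $c$ and the right total edge count via \Cref{lem:homdegree} --- but it diverges in two places worth noting. First, your structural step is actually tighter than the paper's: the paper argues via an ``implied order'' of the rim and exhibits only the $K_{2,3}$ obstruction to a rim vertex of excessive degree, which bounds the rim degrees but does not explicitly rule out the rim splitting into a cycle plus a path (both consistent with max degree $2$ and $n-1$ rim edges); your formulation --- a universal vertex forces the rest to be a linear forest, using $K_{2,3}$ for the degree bound \emph{and} the $K_4$ minor built from $\{c\},\{u_1\},\{u_2\},\{u_3,\dots,u_k\}$ for acyclicity --- closes that gap cleanly, and the count $n-c$ of edges in a linear forest with $c$ components then pins down a single spanning path. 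Second, the conversion from Hamiltonian paths back to Hamiltonian cycles is genuinely different: the paper contracts one of the enforced radial edges (the same identification trick as in \Cref{lem:cycleeventransfer}), recovering each cycle of $K_{n-2}$ with multiplicity $2$ and dividing by $2$, whereas you extract the endpoint-constrained path polynomials $HP_n^{(a,b)}$ with degree-$1$ marker variables and reassemble $\gf(K_n,\mathcal{UHC}_n)=\frac{1}{n}\sum_{\{a,b\}}x_{ab}\,HP_n^{(a,b)}$. Your route costs $\Theta(n^2)$ separate (constant-depth) cascades of \Cref{lem:homdegree}, which is still a legitimate polynomial-size $c$-reduction, and it avoids the vertex-identification bookkeeping; the paper's contraction is a single substitution and reuses machinery already established for cycles. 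Your explicit observation that outerplanarity implies $3$-colorability, hence homomorphy to $K_3$ and then to $H$, is exactly the (implicit) role of the triangle hypothesis in the paper.
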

\begin{proof}
    We will reduce to Hamiltonian Cycle by using a construction as in \Cref{fig:optriangle}. This
    means, we pick an arbitrary vertex $c$ and enforce all $n$ outgoing edges from this vertex
    via homogeneous components. We further enforce the whole graph to have $n+n-3$ edges. The graph
    given is obviously outerplanar but we still need to proof that no other graph fulfilling our
    criteria can be outerplanar.

    We call the implied order of the graph, the order of the outer circle of vertices starting from
    the star and ending at it again without any edges crossing. As there are two such orderings let
    us fix an arbitrary one for every graph. Let us now look at a graph which has not an implied
    order of the outer vertices. This implies that there exists a vertex $u$ which has degree
    4. With our ordering every vertex (except $c$ up to and including the later defined vertex $v$
    has a single parent. Furthermore, let $v$ be the first vertex of degree 4 in this order and let
    $p(v)$ be the parent of $v$. Notice that by enforcing all $n$ instead of just $n-2$ edges
    starting at the center, a parent $p(v)\neq c$ has to exist.

    Let $u,u'$ denote the other vertices adjacent to $v$ different than $p(v)$ and $c$. As we
    enforced edges from $c$ to every vertex, we can easily see the $K_{2,3}$ with $v,c$ on the one
    side and $u,u',p(v)$ on the other side. Hence the graph cannot be outerplanar. This implies that
    every vertex except $c$ and the two neighbouring vertices have degree at most
    3. Enforcing the overall number of edges gives us at least degree 3 and hence implies equality.

    From this we can reconstruct all cycles in a $K_{n-2}$. We need to remove the center of the star
    and glue the two vertices on the cycle next to the center together. We do this by a similar
    argument as in the proof for \Cref{lem:cycleeventransfer}. We evaluate the other enforced edges
    with one to get all cycles in a $K_{n-2}$ where every monomial is weighted by 2. Division again
    gives us the correct polynomial. Taking the homogeneous components as described only increases
    the circuit by a factor of $n$.
\end{proof}

\begin{theorem}
    If $H$ has an edge then $\Fouterplanar$ is \vnp-complete under c-reductions and otherwise trivial.
\end{theorem}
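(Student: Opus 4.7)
The plan is to handle three cases for $H$. If $K_3$ is homomorphic to $H$ (i.e., $H$ contains a triangle or has a self-loop), the result follows directly from \Cref{thm:optrianglehard}. If $H$ has no edges, then $\Fouterplanar$ is the zero polynomial. The only missing case is when $H$ has at least one edge but is triangle-free and loopless.

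In this missing case we would reduce from $\gf(K_{n_0},\mathcal{UHC}_{n_0})$, which is \vnp-complete by \Cref{thm:burguhc}. The wheel gadget used for \Cref{thm:optrianglehard} cannot be reused, because the center together with any two neighbouring outer vertices forms a triangle that does not embed into a triangle-free $H$. Instead we would subdivide the outer cycle by inserting an extra ``buddy'' pendant between every pair of consecutive outer vertices, as pictured in \Cref{fig:buddy}. Concretely, on $K_n$ with $n = 1 + n_0 + \binom{n_0}{2}$, designate a center $c$, hubs $v_1, \dots, v_{n_0}$ and a pendant $p_{ij}$ for each pair $\{i,j\}$ of hubs; zero out every edge variable that does not belong to the gadget (edges among pendants, any edge from $c$ to a pendant, and any edge from $p_{ij}$ to a hub outside $\{v_i, v_j\}$), and place the weight of the $K_{n_0}$-edge $\{i,j\}$ on $p_{ij}$--$v_i$. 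The intended realisation, the star on $c$ together with an alternating hub-pendant cycle of length $2n_0$, is bipartite, hence homomorphic to every $H$ with an edge, and is outerplanar because the even cycle can be drawn as the outer boundary with $c$ attached from the outside.

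The reduction then follows the template of \Cref{thm:optrianglehard}. Two applications of \Cref{lem:homdegree} enforce first all $n_0$ edges out of $c$, and then a total of $2n_0$ pendant-hub edges, after which any alternating hub-pendant cycle surviving in the subgraph yields a monomial equal to the weight of the corresponding Hamiltonian cycle in $K_{n_0}$. The main obstacle, as in \Cref{thm:optrianglehard}, is the structural claim that no other outerplanar configuration survives these enforcements: by the same $K_{2,3}$-minor argument witnessed by \Cref{fig:threeedgeouterplanar} (with the center $c$ playing the role of one side of the $K_{2,3}$), a hub cannot have three or more fully-used pendant-neighbours, and ``half-used'' pendants should be ruled out either by a complementary minor argument or by a further homogeneous-component extraction that links the weights on the two sides of each pendant. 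Once this is established, contracting each pendant back to a single edge on its two hub-neighbours identifies the extracted polynomial with $\gf(K_{n_0},\mathcal{UHC}_{n_0})$, up to a constant factor that can be divided out over the infinite field as in \Cref{lem:cycleeventransfer}.

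Membership of $\Fouterplanar$ in \vnp is immediate from Valiant's criterion, since outerplanarity together with homomorphism to the fixed graph $H$ is polynomial-time checkable.
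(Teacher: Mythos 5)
Your case split and the first two branches are fine, and your instinct to bipartite-ize the wheel gadget in the triangle-free case is exactly the right one; the paper does the same, but with a different gadget. The paper gives each non-center vertex $v$ a single \emph{buddy vertex} $v'$, enforces \emph{all} $n$ buddy edges $\{v,v'\}$ in one application of \Cref{lem:homdegree}, zeroes the edges $c$--$v'$, $v$--$u$ and $v'$--$u'$, and then contracts each buddy pair to land exactly in the situation of \Cref{thm:optrianglehard}. Because every buddy edge is enforced, there is no ``partially used'' gadget vertex to worry about. Your gadget instead introduces a pendant $p_{ij}$ for every \emph{pair} of hubs, of which only $n_0$ out of $\binom{n_0}{2}$ should be used in a correct realisation, and this is where your proof has a genuine gap.

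Concretely, the half-used pendants you flag cannot be dismissed. A pendant attached to only one hub is a degree-one vertex; attaching leaves never creates a $K_4$ or $K_{2,3}$ minor, so no outerplanarity argument can exclude them. A configuration consisting of the enforced star at $c$ together with $2n_0$ distinct pendants each using exactly one hub edge is connected, outerplanar, bipartite (hence homomorphic to any $H$ with an edge), and satisfies both of your homogeneous-degree constraints ($n_0$ star edges, $2n_0$ pendant--hub edges in total); it therefore contributes unwanted monomials that your extraction does not separate from the Hamiltonian-cycle monomials. Your proposed repair of ``linking the two sides of each pendant'' by further homogeneous-component extractions does not work either: \Cref{lem:homdegree} blows the circuit up by a factor of $\mathcal{O}(n\delta_n)$ per application and, as the paper notes, can only be nested a constant number of times, whereas you would need one linkage per pendant, i.e.\ $\Theta(n^2)$ of them; a single global extraction only fixes the \emph{total} pendant-edge count, which the bad configurations already satisfy. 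To close the gap you should replace the pair-indexed pendants by per-vertex buddies (or otherwise arrange that every gadget vertex is used in every admissible realisation, so that all its incident gadget edges can be enforced at once), which is precisely the paper's construction. The membership argument is fine.
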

\begin{proof}
    To make the graph homomorphic to a single edge we will modify it in the following way. For every
    vertex $v$, except $c$, we choose a \emph{buddy vertex} $v'$. We enforce the edge between
    every vertex and his buddy vertex and set the edge between a buddy vertex and $c$ to
    zero. Additionally, we set all vertices from $v$ to any other non buddy vertex to zero and all
    edges from a buddy vertex to a different buddy vertex to be zero. In essence this splits every
    vertex into a left and right part (see \Cref{fig:buddy}). The hardness proof follows from
    \Cref{thm:optrianglehard} by contracting the edge between a vertex and his buddy vertex. Hence
    the combined degree of a vertex and his buddy vertex is at most three. Taking the homogeneous
    components increases the circuit size by a factor of $n$.
    
    We know by~\cite{Mitchell1979229} that checking if a graph is outerplanar is possible in linear
    time. With this we can use Valiant's Criterion to show the membership.
\end{proof}

\subsection{Planar Graphs}
\label{sec:planar-graphs}

\begin{figure}
    \centering
    \begin{tikzpicture}[scale=0.5]
        \node[k] (1) at (-3,0) {};
        \node[k] (2) at (-2,0) {};
        \node[k] (3) at (-1,0) {};
        \node[k] (4) at (0,0) {};
        \node[k] (5) at (1,0) {};
        \node[k] (6) at (2,0) {};
        \node[k] (7) at (3,0) {};

        \node[nu] (a) at (0,3) {a};
        \node[nu] (b) at (0,-3) {b};

        \draw[ultra thick] (1) -- (2);
        \draw (2) -- (3) -- (4) -- (5) -- (6);
        \draw[ultra thick] (6) -- (7);
        \draw[ultra thick] (a) -- (1) -- (b);
        \draw[ultra thick] (a) -- (2) -- (b);
        \draw[ultra thick] (a) -- (3) -- (b);
        \draw[ultra thick] (a) -- (4) -- (b);
        \draw[ultra thick] (a) -- (5) -- (b);
        \draw[ultra thick] (a) -- (6) -- (b);
        \draw[ultra thick] (a) -- (7) -- (b);
    \end{tikzpicture}
    \caption{Planar Gadget\label{fig:planar}}
\end{figure}
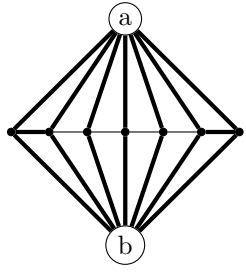

\begin{lemma}\label{lem:planar}
    All graphs isomorphic to \Cref{fig:planar}  with the thick edges fixed and $n+2+2(n+2)$ edges
    required are all permutations of the vertices $(1,\dots,n)$.
\end{lemma}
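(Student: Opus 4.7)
The plan is to show that the stated constraints---the thick edges of \Cref{fig:planar} enforced and the total edge count fixed at $n+2+2(n+2)$---force the non-thick chosen edges to form a Hamiltonian path on the bottom vertices $\{1,\dots,n\}$ having $\{1,2\}$ and $\{n-1,n\}$ as its two end-edges. Such a path is in one-to-one correspondence with a permutation of the middle vertices, i.e.\ with a permutation of $(1,\dots,n)$ fixing $1,n$ at the two ends.

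First I would unpack the thick edges. They contribute a complete bipartite subgraph $K_{2,n}$ between the apex pair $\{a,b\}$ and the bottom vertices $\{1,\dots,n\}$, plus the two fixed bottom edges $\{1,2\}$ and $\{n-1,n\}$. Beyond this backbone, any further selected edge must either connect two bottom vertices or be the single apex--apex edge $\{a,b\}$, since every other potential edge already belongs to the enforced set.

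The heart of the argument is a planarity obstruction on the backbone. In any planar embedding of $K_{2,n}$, the $n$ bottom vertices occur in a fixed cyclic order around the $a$--$b$ axis and all faces are quadrilaterals $a\,v_{\sigma(i)}\,b\,v_{\sigma(i+1)}$. Consequently, any additional bottom--bottom edge $\{v_i,v_j\}$ that skips over another bottom vertex $v_k$ in that cyclic order produces a $K_{3,3}$-subdivision---with $\{a,b,v_k\}$ on one side and a subdivision containing $v_i,v_j$ on the other---and destroys planarity. Thus every non-thick bottom--bottom edge is forced to connect cyclically adjacent bottom vertices. Using the prescribed edge count to fix the number of such ``chord'' edges, and appending the two thick end-edges $\{1,2\}$ and $\{n-1,n\}$, one sees that the chords must form a spanning path on the bottom vertices with $1$ and $n$ as endpoints, which is precisely the desired permutation structure.

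I expect the main obstacle to be the case analysis that rules out alternative selections matching the count: disconnected chord collections, a short cycle among bottom vertices plus isolated chords, or a selection that swaps in the apex--apex edge $\{a,b\}$ to trade for a chord. Each of these I would eliminate by combining the tight edge-count bookkeeping (any non-path chord configuration either falls short of or exceeds the required total once the Hamiltonian path length is computed) with the same $K_{3,3}$/$K_5$ minor check applied to the cyclic ordering forced by the planar embedding of the $K_{2,n}$-backbone (the apex--apex edge, once accompanied by two non-adjacent chords, already forces a Kuratowski minor).
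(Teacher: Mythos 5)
Your overall strategy---use planarity of the selected subgraph to force every non-thick edge to join consecutive middle vertices, then let the edge count turn this into a Hamiltonian path---is the same strategy the paper follows, and your observation about the quadrilateral faces of the embedded $K_{2,n}$-backbone is a sound (arguably cleaner) way to run it. However, the planarity obstruction you actually offer is wrong as stated. A single chord $\{v_i,v_j\}$ that ``skips over'' $v_k$ in the cyclic order of some embedding does \emph{not} produce a $K_{3,3}$-subdivision: $K_{2,n}$ plus one chord is planar for every choice of $i,j$ (re-embed with $v_i$ and $v_j$ consecutive), and there is no way to complete $\{a,b,v_k\}$ into one side of a $K_{3,3}$ using only $v_i$, $v_j$ and the chord, because nothing joins $v_i$ or $v_j$ to $v_k$ except through $a$ and $b$. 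The cyclic order is a property of a chosen embedding, not of the abstract graph, so ``skips in that order'' cannot by itself certify non-planarity. To make your route rigorous you must instead fix a planar embedding of the \emph{entire} selected graph, note that each chord then lies inside a single face of the induced embedding of the backbone, and that the only faces of that embedding containing two middle vertices are the quadrilaterals $a\,v\,b\,v'$ between consecutive ones; the $K_{3,3}$ plays no role in that version of the argument.

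The paper's own certificate is different and purely combinatorial: if a middle vertex $v$ has three neighbours $u,u',p(v)$ among the middle vertices, then $\{v,a,b\}$ and $\{u,u',p(v)\}$ already form a $K_{3,3}$ \emph{subgraph} (each vertex of the second set is adjacent to $a$, to $b$, and to $v$), so every middle vertex has middle-degree at most two and the enforced edge count does the rest. Your face argument, once repaired, is in fact slightly stronger: it shows the chords form a subgraph of a single $n$-cycle, which disposes of short cycles among the middle vertices for free, a case the paper's degree bound alone does not explicitly exclude. Your concern about trading a chord for the apex--apex edge $\{a,b\}$ is reasonable but moot here, since the lemma fixes the host graph to be that of \Cref{fig:planar}, which contains no such edge (in the reduction the corresponding variable is set to zero).
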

\begin{proof}
    Take an embedding in the plane of the graph without any crossings. If we show that every vertex
    has at most one edge going to the right, it follows that the set of vertices from left to right
    ordered is a permutation of the vertices.

    Let us look at the following subgraph. Let $v$ be a vertex with two right successors $u,u'$ and
    a parent $p(v)$. By construction the parent always exists. We denote the top and bottom vertex
    by $a$ and $b$ in our graph. We can now build a $K_{3,3}$ minor in the following
    way. $S_{1}=\{v,a,b\}$ and $S_{2}=\{u,u',p'\}$. As $a$ and $b$ are connected to every vertex we only
    need to check that $u$ is connected to $u,u'$ and $p$ which is by assumption. This proves that via
    edge deletion our graph would have a $K_{3,3}$ minor if the vertices would not give us a
    permutation.
\end{proof}

\begin{theorem}
    If $H$ has an edge then $\Fplanar$ is \vnp-complete under c-reductions. Otherwise $\Fplanar$ is in \vac.
\end{theorem}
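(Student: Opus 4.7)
The plan is to follow the two-step strategy of the outerplanar proof: first establish \vnp-hardness under the strong hypothesis that $H$ contains a triangle, and then extend the hardness to an arbitrary $H$ with an edge via the buddy-vertex construction of \Cref{sec:outerplanar-graphs}. If $H$ has no edge, then no subgraph of $K_{n}$ containing an edge can be homomorphic to $H$, so $\Fplanar$ is the zero polynomial and is in \vac. Membership in \vnp in the hard cases follows from Valiant's Criterion, since planarity of a candidate subgraph can be tested in polynomial time.

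The first step is to reduce $\gf(K_{n},\mathcal{UHC}_{n})$, which is \vnp-complete by \Cref{thm:burguhc}, to $\Fplanar$ under the assumption that $H$ contains a triangle. I would evaluate the polynomial on $K_{n+2}$, identify two vertices as the poles $a,b$ and the remaining $n$ as the spine, and apply \Cref{lem:homdegree} to (i)~enforce each thick edge of the gadget of \Cref{fig:planar} to be present by extracting the degree-one homogeneous component in its variable, (ii)~set every non-gadget edge variable to zero, and (iii)~fix the total number of selected edges to $n+2+2(n+2)$. By \Cref{lem:planar} the surviving monomials correspond exactly to permutations of the spine, i.e.\ to Hamiltonian paths along it. The gadget itself is $3$-colourable (alternate two colours along the spine path and give $a$ and $b$ a common third colour, which is legal since $a$ and $b$ are non-adjacent), hence homomorphic to any triangle of $H$ and therefore present in $\Fplanar$. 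An endpoint-gluing step in the spirit of \Cref{lem:cycleeventransfer} then converts the spine Hamiltonian-path polynomial into the \vnp-hard Hamiltonian cycle polynomial on $K_{n}$.

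For the general case in which $H$ has an edge but no triangle, the planar gadget itself is not homomorphic to $H$ because each triple $\{a,b,i\}$ forms a triangle. Here I would apply the buddy-vertex trick used in the proof of the outerplanar theorem: duplicate every vertex of the gadget into two twins $(v,v')$ joined by an enforced edge, and zero out every edge between same-side twins so that the resulting graph is bipartite and therefore homomorphic to any single edge of $H$. Contracting the buddy edges via \Cref{lem:homdegree} recovers the configuration of the first step and yields the same reduction from the Hamiltonian cycle polynomial.

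The main obstacle I anticipate lies in the second step: the poles $a$ and $b$ already participate in many triangles with the spine vertices, so the duplication and the choice of which edges to zero out must be carried out carefully so that the modified graph is simultaneously bipartite, planar, and, after contraction of the buddy edges, still subject to \Cref{lem:planar}. The remainder of the argument is routine bookkeeping with \Cref{lem:homdegree} and \Cref{lem:planar}.
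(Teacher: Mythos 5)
Your first step (the triangle case) and the trivial case are fine, and your overall reduction skeleton --- enforce the thick edges of \Cref{fig:planar}, fix the total edge count, invoke \Cref{lem:planar} to see that the survivors are exactly the spine permutations, then glue endpoints to pass from Hamiltonian paths to Hamiltonian cycles --- matches the paper. But the actual content of the theorem is the case where $H$ has an edge and no triangle, and there you have correctly identified the obstacle without resolving it: the poles $a,b$ each form a triangle with every spine edge, and duplicating vertices into buddy pairs does not by itself kill these odd cycles, because $a$ stays adjacent to both endpoints of each spine edge (or to a vertex and its buddy, which are themselves joined by an enforced edge). Flagging this as ``to be carried out carefully'' is where the proof is missing its key idea.

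The paper's fix is to \emph{subdivide the pole edges}: in addition to giving each spine vertex $v$ a buddy $u_{v}$, each edge $(a,v)$ and $(b,v)$ is replaced by a path of length two through a new vertex $v'_{a}$ (resp.\ $v'_{b}$). Every face of the resulting gadget is then a square ($a,v,v'_{a},u_{v}$ and $b,v,v'_{b},u_{v}$), so all cycles are even; folding $a$ onto $b$ yields a height-one grid, and a grid folds onto a single edge, so the whole gadget is homomorphic to any edge of $H$. The subdivision vertices and buddy edges are enforced via \Cref{lem:homdegree} and contracted afterwards so that \Cref{lem:planar} still governs the surviving monomials, at a circuit-size cost of a factor $O(n)$. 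Note also that the paper does not need your two-case split at all: since the subdivided gadget is homomorphic to a single edge, the general case is handled directly, and no separate triangle lemma is required. Without the subdivision idea your second step does not go through, so the proposal as written has a genuine gap precisely at the point you flagged.
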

\begin{proof}
    We again glue the second and second to last vertex in our planar gadget together in a similar
    manner as in the previous constructions to get all cycles from a path. Notice, how these are
    independent of the order and hence the same for all possible ordering.
    
    However, this graph is not yet homomorphic to a single edge. To accomplish this, we will use a
    graph of size $2n$. We, as in the outerplanar case, enforce every vertex, except $a$ and $b$, to
    have a buddy vertex $u_{v}$. Then we subdivide the edge $(a,v)$ and $(b,v)$ for every original,
    meaning none buddy, vertex $v$ with a new vertex $v'_{a}$, $v'_{b}$ respectively. This will give
    us for every part a square consisting of the vertices $a,v,v'_{a},u_{v}$ and the square
    $b,v,v'_{b},u_{v}$.

    Now it is easy to see that we can fold $a$ to $b$ which leaves us with a grid of
    height one. A grid can be easily folded to one edge. The size of the circuit is increased by a
    factor of at most $2n$.

    As testing planarity is easy, we can use Valiant's Criterion to show membership.
\end{proof}

\subsection{Genus $k$ graphs}
\label{sec:genus-k-graphs}
Graph embeddings are one of the major relaxations of planarity. For this we find a surface of a
specific type such that a graph can be embedded in this surface without any crossing edges. If we
want to increase the \emph{orientable genus} of a surface by one, we can glue a handle onto it which
edges can use without crossing other edges. We call a graph a \emph{genus $k$ graph} if there exists
a surface of orientable genus $k$ such that $G$ can be embedded in this surface and $k$ is
minimal. Notice, that a genus $0$ graph is planar. While the topic of graph genus is vast, we will
mostly use theorems as a blackbox and only reason about graphs of genus zero and one. For a detailed
coverage of the topic, the reader is referred to~\cite{diestel2000graph}.

With the planar result in place we can use the simple proof strategy. Construct a genus $k$ graph
where we append the planar construction. In this way the genus bound will ensure that our planar
gadget gives us all permutation of vertices as long as the connection of these two graphs will not
reduce the genus.
\begin{figure}
    \begin{subfigure}[b]{0.2\textwidth}
        \begin{tikzpicture}
            \node[nu] (1) at (0,0) {1};
            \node[nu] (2) at (0,1) {2};
            \node[nu] (3) at (1,1) {3};
            \node[nu] (4) at (1,0) {4};
            \node[nu] (5) at (-0.5,-0.5) {5};
            \node[nu] (6) at (-0.5,1.5) {6};
            \node[nu] (7) at (1.5,1.5) {7};
            \node[nu] (8) at (1.5,-0.5) {8};
            
            \draw (1) -- (2) -- (3) -- (4) -- (1);
            \draw (5) -- (6) -- (7) -- (8) -- (5);
            \draw (1) -- (3);
            \draw (2) -- (4);
            \draw (5) -- (1);
            \draw (6) -- (2);
            \draw (7) -- (3);
            \draw (8) -- (4);
        \end{tikzpicture}
        \caption{Gadget\label{fig:oneblock}}
    \end{subfigure}
    \begin{subfigure}[b]{0.35\textwidth}
        \begin{tikzpicture}[scale=0.9]
            \node[k] (1) at (0,0) {};
            \node[k] (2) at (0,1) {};
            \node[k] (3) at (1,1) {};
            \node[k] (4) at (1,0) {};
            \node[k] (5) at (-0.5,-0.5) {};
            \node[k] (6) at (-0.5,1.5) {};
            \node[k] (7) at (1.5,1.5) {};
            \node[k] (8) at (1.75,-0.5) {};

            \draw (1) -- (2) -- (3) -- (4) -- (1);
            \draw (5) -- (6) -- (7) -- (8) -- (5);
            \draw (1) -- (3);
            \draw (2) -- (4);
            \draw (5) -- (1);
            \draw (6) -- (2);
            \draw (7) -- (3);
            \draw (8) -- (4);
            
            \node[k] (21) at (2.5,0) {};
            \node[k] (22) at (2.5,1) {};
            \node[k] (23) at (3.5,1) {};
            \node[k] (24) at (3.5,0) {};
            \node[k] (26) at (2,1.5) {};
            \node[k] (27) at (4,1.5) {};
            \node[k] (28) at (4,-0.5) {};
            
            \draw (21) -- (22) -- (23) -- (24) -- (21);
            \draw (8) -- (26) -- (27) -- (28) -- (8);
            \draw (21) -- (23);
            \draw (22) -- (24);
            \draw (8) -- (21);
            \draw (26) -- (22);
            \draw (27) -- (23);
            \draw (28) -- (24);
        \end{tikzpicture}
        \caption{Two Gadgets\label{fig:twoblock}}
    \end{subfigure}
    \begin{subfigure}[b]{0.4\textwidth}
        \begin{tikzpicture}[scale=0.9]
            \node[k] (1) at (0,-0.25) {};
            \node[k] (2) at (0,0.75) {};
            \node[k] (3) at (1,0.75) {};
            \node[k] (4) at (1,-0.25) {};
            \node[k] (6) at (-0.5,1) {};
            \node[k] (7) at (1.5,1) {};
            \node[k] (8) at (1.5,-1) {};

            \node[k] (b1) at (-3,0) {};
            \node[k] (b2) at (-2.5,0) {};
            \node[k] (b3) at (-2,0) {};
            \node[k] (b4) at (-1.5,0) {};
            \node[k] (b5) at (-1,0) {};
            
            \node[k] (ba) at (-2,1) {};
            \node[k] (bb) at (-2,-1) {};

            \draw (1) -- (2) -- (3) -- (4) -- (1);
            \draw (b5) -- (6) -- (7) -- (8) -- (b5);
            \draw (1) -- (3);
            \draw (2) -- (4);
            \draw (b5) -- (1);
            \draw (6) -- (2);
            \draw (7) -- (3);
            \draw (8) -- (4);

            \draw (ba) -- (b1) -- (bb) -- (b2) -- (ba) -- (b3) -- (bb) -- (b4) -- (ba) -- (b5) --
            (bb);
            \draw (b1) -- (b2) --  (b3) -- (b4) -- (b5);
        \end{tikzpicture}
        \caption{Gadget with planar gadget\label{fig:planarblock}}
    \end{subfigure}
    \caption{\label{fig:blocks}}
\end{figure}

\begin{lemma}
    The graph in \Cref{fig:oneblock} has genus one.
\end{lemma}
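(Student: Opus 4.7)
To prove that the genus of $G$ is $1$, I would establish both that $G$ is non-planar (genus at least one) and that $G$ embeds on the torus (genus at most one).

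For non-planarity, I would exhibit an explicit $K_{5}$ minor of $G$ via the five branch sets $\{1\}$, $\{2,6\}$, $\{3\}$, $\{4\}$, $\{5,7,8\}$. These are pairwise disjoint, and each is connected in $G$: $\{2,6\}$ by the spoke between $2$ and $6$, and $\{5,7,8\}$ by the path $5$-$8$-$7$ lying inside the outer cycle. The six adjacencies among the first four branch sets come for free from the $K_{4}$ induced on $\{1,2,3,4\}$, and the remaining four adjacencies to $\{5,7,8\}$ are witnessed by the edges $\{1,5\}$, $\{6,5\}$, $\{3,7\}$, and $\{4,8\}$. Hence $K_{5}$ is a minor of $G$, so $G$ is not planar.

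For the upper bound I would give an explicit toroidal embedding by routing a single edge over a handle. Starting from $G$ minus the diagonal $\{1,3\}$, I would draw the inner structure with $1,2,3,4$ at the corners of a square and the other diagonal $\{2,4\}$ inside, splitting the interior into the two triangular faces $\{1,2,4\}$ and $\{2,3,4\}$; I would then place $5,6,7,8$ just outside $1,2,3,4$ respectively and close both the spokes $\{i,i{+}4\}$ and the outer $4$-cycle $5$-$6$-$7$-$8$-$5$ around the square. A direct inspection shows that no two of these edges cross, giving a planar embedding of $G\setminus\{1,3\}$. To add back the missing edge $\{1,3\}$, I would attach a single handle whose two end-disks are cut from the interiors of the faces $\{1,2,4\}$ and $\{2,3,4\}$; since $1$ and $3$ lie on the boundaries of these two faces, the edge $\{1,3\}$ can be routed through the handle without any crossings. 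The result is an embedding of $G$ on the torus.

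The main obstacle is the routine but finicky verification that the proposed planar drawing of $G\setminus\{1,3\}$ really has the claimed face structure, since the handle attachment relies on $1$ and $3$ being on the boundaries of the two chosen triangular faces; once this is settled, combining both bounds yields genus exactly one.
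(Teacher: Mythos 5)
Your proof is correct and follows the same overall strategy as the paper: a Kuratowski-type minor for the lower bound and an explicit one-handle embedding for the upper bound. The only substantive difference is the witness for non-planarity. The paper contracts the edges $(5,6)$ and $(7,8)$ and claims a $K_{3,3}$ on $S_{1}=\{1,2,6'\}$, $S_{2}=\{3,4,7'\}$; as stated this does not verify (for instance, $2$ is adjacent to neither $7$ nor $8$, so the required edge from $2$ to $7'$ is missing), so the paper's branch sets need repair. Your $K_{5}$ minor with branch sets $\{1\}$, $\{2,6\}$, $\{3\}$, $\{4\}$, $\{5,7,8\}$ checks out completely: all five sets are connected, and all ten required adjacencies are present, six from the inner $K_{4}$ and four from the spokes $\{1,5\}$, $\{3,7\}$, $\{4,8\}$ and the outer-cycle edge $\{6,5\}$. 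Your upper bound is also slightly more careful than the paper's one-line appeal to ``one handle for the crossing in the middle'': you exhibit the planar embedding of $G$ minus the diagonal $\{1,3\}$ (the circular ladder on $\{1,\dots,8\}$ plus the diagonal $\{2,4\}$) and verify that $1$ and $3$ lie on the boundaries of the two triangular faces into which $\{2,4\}$ splits the inner square, which is exactly what is needed to route $\{1,3\}$ over the handle. Both arguments yield genus exactly one; yours is the more robust write-up.
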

\begin{proof}
    We can use the given embedding with one handle for the crossing in the middle to show an upper
    bound of one.

    We again construct a $K_{3,3}$ with the sets $S_{1}=\{2,1,6'\}, S_{2}=\{3,4,7'\}$ where $6'$ is
    the vertex constructed from contracting the edge $(5,6)$ and $7'$ from the edge $(7,8)$. And
    hence the graph is not planar and has a lower bound for the genus of one.
\end{proof}

The next theorem shows how we can glue graphs together to increase the genus in a predictable way.
\begin{definition}[\cite{battle1962additivity}]
    G is a \emph{vertex amalgam} of $H_{1},H_{2}$ if $G$ is obtained from disjoint graphs $H_{1}$
    and $H_{2}$ where we identify one vertex form $H_{1}$ with one vertex from $H_{2}$.
\end{definition}

With this we restate a theorem from Miller \cite{battle1962additivity} to compute the genus of a given graph.
\begin{theorem}[\cite{battle1962additivity}]\label{thm:addgenus}
    Let $\gamma(G)$ be the orientable genus of a graph $G$. Let $G$ be constructed from vertex
    amalgams of graphs $G_{1},\dots,G_{n}$. Then $\gamma(G)=\sum_{i=1}^n \gamma(G_{i})$.
\end{theorem}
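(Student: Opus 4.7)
The plan is to split the equality $\gamma(G) = \sum_i \gamma(G_i)$ into two inequalities and, by induction on $n$, reduce both to the case $n=2$, where $G$ is the vertex amalgam of $H_1$ and $H_2$ at a common vertex $v$.

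The upper bound $\gamma(G) \le \gamma(H_1) + \gamma(H_2)$ I would prove constructively. Take optimal embeddings of $H_1$ and $H_2$ in closed orientable surfaces $\Sigma_1, \Sigma_2$, of respective genera $\gamma(H_1)$ and $\gamma(H_2)$, each with $v$ sitting at an interior point. In each $\Sigma_i$ excise a small open disk around $v$ chosen so that its boundary meets no edge in its interior, and glue the two boundary circles. The resulting surface is homeomorphic to the connected sum $\Sigma_1 \# \Sigma_2$, which has genus $\gamma(H_1) + \gamma(H_2)$, and the two local embeddings fit together along $v$ into a valid embedding of $G$.

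The lower bound $\gamma(G) \ge \gamma(H_1) + \gamma(H_2)$ is the substantive half. Fix an optimal embedding of $G$ into a closed orientable surface $\Sigma$ of genus $\gamma(G)$. The aim is to produce a simple closed curve $C$ on $\Sigma$ that passes through $v$ and meets no edge of $G$ in its interior, such that one complementary component of $\Sigma \setminus C$ contains $H_1 \setminus \{v\}$ and the other contains $H_2 \setminus \{v\}$. Once such a $C$ is in hand, capping each side with a disk yields closed orientable surfaces $\Sigma_1', \Sigma_2'$ on which $H_1, H_2$ respectively embed, and since $C$ separates $\Sigma$ one has $\gamma(\Sigma) = \gamma(\Sigma_1') + \gamma(\Sigma_2') \ge \gamma(H_1) + \gamma(H_2)$.

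The crux, and what I expect to be the main obstacle, is producing this separating curve. Inspect the rotation at $v$: its incident edges form a cyclic sequence, each labelled $H_1$ or $H_2$. If the labels fall into two contiguous arcs around $v$, a short transversal across $v$ separates them locally and extends to a global simple closed curve using the fact that $H_1 \setminus \{v\}$ and $H_2 \setminus \{v\}$ then lie in disjoint faces of $(G \setminus \{v\})$. When the labels interleave one must, in the spirit of Battle, Harary, Kodama and Youngs, perform a topological surgery, cutting $\Sigma$ along an auxiliary loop and regluing, that strictly reduces the number of interleavings at $v$ without increasing the genus; minimality of $\gamma(\Sigma)$ then forces termination in a block-respecting rotation at $v$, reducing to the previous case. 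Pushing the outer induction on $n$ through, by amalgamating the first two pieces and applying the $n=2$ result, completes the argument.
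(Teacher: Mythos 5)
The paper does not prove this statement at all: it is quoted from the literature (the Battle--Harary--Kodama--Youngs additivity theorem, cited here via Miller's paper) and used as a black box, so there is no in-paper argument to compare yours against. Judged on its own, your proposal has the right skeleton --- reduce to $n=2$ by induction, prove $\leq$ by a connected-sum construction and $\geq$ by exhibiting a separating curve through the amalgamation vertex --- and the upper bound is essentially correct once the disk excision is repaired: you should remove a small open disk from a \emph{face} of each embedding incident to $v$ and glue so that the two copies of $v$ are identified along the resulting annulus. A disk ``around $v$'' (with $v$ in its interior) necessarily has boundary crossing every edge incident to $v$, so it cannot avoid edge interiors as you require.

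The lower bound, however, contains a genuine gap, and it sits exactly where the whole difficulty of the theorem lives. You assert that when the edges of $H_1$ and $H_2$ interleave in the rotation at $v$ one can ``cut along an auxiliary loop and reglue'' so as to strictly reduce the number of interleavings without increasing the genus. That re-embedding step \emph{is} the theorem: nothing in your sketch identifies the auxiliary loop, shows that the surgery yields a valid embedding of the same graph, or shows that the genus does not go up (a generic cut-and-reglue can change the genus in either direction). Likewise, in the block-respecting case you claim the local transversal at $v$ ``extends to a global simple closed curve'' separating $H_1 \setminus \{v\}$ from $H_2 \setminus \{v\}$; this requires $H_2$ to lie in the closure of a single face of the induced embedding of $H_1$ (and vice versa), which again must be proved rather than observed. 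The termination argument is also misplaced: if the surgery really reduced a non-negative interleaving count without raising the genus, termination would follow from well-ordering alone, and minimality of $\gamma(\Sigma)$ would play no role. As written, your argument establishes only the easy inequality; the hard inequality still rests entirely on the unproved surgery step, i.e.\ on the actual content of Battle--Harary--Kodama--Youngs.
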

This now gives us immediately the result that a graph constructed as in \Cref{fig:twoblock} with $k$
gadgets has genus $k$.
\begin{theorem}
    If $H$ has an edge then $\Fgenus$ is \vnp-complete under c-reductions for any $k$. Otherwise $\Fgenus$ is in \vac.
\end{theorem}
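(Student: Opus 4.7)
The plan is to reuse the planar reduction of the previous section, this time hung off a rigid ``genus-$k$ skeleton'' built from the genus-$1$ gadget of \Cref{fig:oneblock}. First, I would vertex-amalgamate $k$ disjoint copies of that gadget along a common vertex, obtaining by \Cref{thm:addgenus} a fixed-structure graph $C_k$ of constant size with $\gamma(C_k)=k$. Then I would vertex-amalgamate $C_k$ with the planar gadget of \Cref{sec:planar-graphs} on a single shared vertex $v$, producing the graph $G_n^k$ sketched in \Cref{fig:planarblock}; another application of \Cref{thm:addgenus} gives $\gamma(G_n^k)=k+0=k$, so $G_n^k$ is a legitimate instance of the genus-$k$ property.

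For the reduction from $\gf(K_{n-2},\mathcal{UHC}_{n-2})$ I would enforce every edge of $C_k$ together with the thick edges of the planar gadget via \Cref{lem:homdegree}, zero out every edge between $C_k\setminus\{v\}$ and the planar part, and enforce the same total edge count as in \Cref{lem:planar}. Any surviving monomial is then a vertex amalgam of $C_k$ with some subgraph $P$ of the planar gadget, whose genus by \Cref{thm:addgenus} equals $k+\gamma(P)$; belonging to $\Fgenus$ therefore forces $\gamma(P)=0$, and \Cref{lem:planar} forces $P$ to be a permutation of the middle vertices. The endpoint-gluing trick from the planar proof then extracts all cycles on $K_{n-2}$, yielding the desired \vnp-hardness.

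To make $G_n^k$ homomorphic to an arbitrary $H$ containing an edge, I would apply the buddy-vertex construction of \Cref{sec:outerplanar-graphs,sec:planar-graphs} to the planar piece, folding it down to a single edge, and fold the constant-size core $C_k$ by hand onto an edge as well; since $C_k$ has constant size this folding blows up the circuit only by an $O(1)$ factor on top of the polynomial blow-up from the homogeneous-component applications. Membership in \vnp follows from Valiant's Criterion using that for fixed $k$ the class of graphs of genus at most $k$ is recognizable in polynomial time. The case of $H$ without edges gives the zero polynomial, which is trivially in \vac.

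The main obstacle will be the rigidity argument: I must ensure that after enforcing the core edges, zeroing out cross edges and fixing the total edge count, the only remaining freedom really lies inside a planar subgraph of the planar gadget, and not inside some alternative amalgam structure that hides genus in the core while letting the planar part become non-planar. The fact that I amalgamate along a single vertex, combined with the exact additivity of \Cref{thm:addgenus}, is what I expect to make this rigidity go through cleanly.
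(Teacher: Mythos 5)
Your overall strategy is the paper's own: build a genus-$k$ core by vertex-amalgamating $k$ copies of the gadget of \Cref{fig:oneblock} (the paper chains them as in \Cref{fig:twoblock} rather than gluing all at one common vertex, but \Cref{thm:addgenus} makes this immaterial), amalgamate the planar gadget onto it, use additivity to force the free part to be planar, and then invoke \Cref{lem:planar} and the endpoint-gluing trick to recover all Hamiltonian cycles. That rigidity argument is exactly how the paper argues, and your spelling-out of ``genus exactly $k$ forces $\gamma(P)=0$'' is correct.

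The one step that fails as written is ``fold the constant-size core $C_k$ by hand onto an edge.'' A graph is homomorphic to a single edge if and only if it is bipartite, and the gadget of \Cref{fig:oneblock} is not: the square $1,2,3,4$ together with the diagonals $(1,3)$ and $(2,4)$ contains triangles. No amount of hand-folding maps a triangle onto $K_2$, and since the theorem must cover $H=K_2$ (which has an edge but admits no homomorphic image of an odd cycle), the core as you have it would contribute nothing to the polynomial. The paper's fix is to first subdivide the edges $(1,3)$ and $(2,4)$, which kills the odd cycles while preserving the genus-one block structure, and only then fold the gadget down to a square with dangling edges and finally to an edge. You need this modification, plus a check that the subdivided gadget still has genus one so that \Cref{thm:addgenus} still yields total genus $k$. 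With that repair your argument coincides with the paper's; your membership claim is fine (polynomial-time recognizability of fixed genus is more than the NP bound the paper uses for Valiant's criterion).
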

\begin{proof}
    With \Cref{thm:addgenus}, \Cref{lem:planar} and the construction in \Cref{fig:blocks} we are
    almost done. Because we enforced a genus $k$ graph to occur all graphs that homomorphic to the
    planar gadget have genus zero and hence be planar.

    The only thing left to do is to modify our graphs such that they are homomorphic to
    an edge without violating the properties. It is clear that we can fold our genus one gadgets
    together. If we then subdivide the edge $(1,3)$ and $(2,4)$ (which keeps our block property) we
    can first fold 7 to 5 and 3 to 1. Folding then again 6 to 8 and 2 to 4 we get a square with two
    dangling edges. The dangling edges can be folded onto the square and the square is homomorphic
    to one edge. This construction increases the size of the circuit at most by a factor of
    $14k+2n$. As testing for a fixed genus is in NP, we can use Valiant's Criterion to show
    membership.
\end{proof}

\section{Conclusion}
\label{sec:conclusion}
We have shown many dichotomy results for different graph classes but some classes are still open.
We want to especially mention the case of our graph class being the class of trees. It is known that
we can use Kirchoff's Theorem to find all spanning trees of a given graph. This, however, does not
include monomials of total degree less than $n-1$ which our polynomials include. From the algebraic
view, the knowledge ends here. In the counting view, where we solve the task of counting all trees
in a graph, a bit more is known. Goldberg and Jerrum \cite{goldberg2000counting} showed that counting the number of
subtrees that are distinct up to isomorphism is $\sharpp$-complete. This, combined with our
dichotomy for trees including the vertices, gives us a strong indication that the similar problem is
\vnp-hard in the algebraic world.

A different expansion of these results would be the case of bounded treewidth. As mentioned earlier,
in the counting version the case of bounded treewidth is indeed the most general form and completely
characterizes the easy and hard instances of counting graph homomorphisms. Additionally, recent advancements
showed that graph homomorphisms of a specific type characterize \vp. Can homomorphism from graph
classes parameterized by treewidth, similar
to the counting case, be used for a complete characterization of \vp and \vnp?

An interesting research direction would be the case of disconnected graph properties.
Rugy-Altherre looked at the property that any graph is
homomorphic to a given graph $H$. This includes disconnected graphs with connected components larger
than one vertex. We instead only looked at restricted homomorphisms where one major connected
component exists. It is unclear to the author if our proofs could be adapted to this case.

\paragraph*{Acknowledgments}
I want to thank my doctoral advisor M.\ Bläser for his guidance. I additionally want to thank
R.\ Curticapean for many discussions on the counting versions on problems and B.\ V.\ Raghavendra
Rao for introducing me to this topic.

\bibliographystyle{abbrv}

\end{document}